\newtheorem{definition}{Definition}[section]
\newtheorem{Theorem}[definition]{Theorem}
\DeclareMathAlphabet\mathbit
\DeclareOldFontCommand{\bi}{\normalfont\bfseries\itshape}{\mathbit}
\newcommand{\be}{\begin{equation}}
\newcommand{\ee}{\end{equation}}
\def\fakebold#1{\relax\ifvmode\leavevmode\fi%
\ifmmode%
\setbox0=\hbox{$#1$}%
\else%
\setbox0=\hbox{#1}%
\fi%
\kern-.02em\copy0 \kern-\wd0%
\kern .04em\copy0 \kern-\wd0%
\kern-.0125em\raise.02em\box0%
}%
\begin{document}

\title[{The Riemann problem of 1D elastodynamics}] {Analytical solution to the Riemann problem of 1D elastodynamics with general constitutive laws}

\author[H. Berjamin \etal] {H. Berjamin \and B. Lombard}
\address{LMA, CNRS, UPR~7051, Aix-Marseille~Univ., Centrale~Marseille, 13453~Marseille~Cedex~13, France}

\extraauthor{G. Chiavassa}
\extraaddress{Centrale~Marseille, CNRS, Aix-Marseille Univ., M2P2~UMR~7340, 13451~Marseille~Cedex~20, France}

\extraauthor{N. Favrie}
\extraaddress{Aix-Marseille Univ., UMR~CNRS~7343, IUSTI, Polytech~Marseille, 13453~Marseille~Cedex~13, France}

\received{\recd DD Month YYYY. \revd DD Month YYYY}

\maketitle

\eqnobysec

\begin{abstract} 
	Under the hypothesis of small deformations, the equations of 1D elastodynamics write as a $2\times 2$ hyperbolic system of conservation laws. Here, we study the Riemann problem for convex and nonconvex constitutive laws. In the convex case, the solution can include shock waves or rarefaction waves. In the nonconvex case, compound waves must also be considered. In both convex and nonconvex cases, a new existence criterion for the initial velocity jump is obtained. Also, admissibility regions are determined. Lastly, analytical solutions are completely detailed for various constitutive laws (hyperbola, tanh and polynomial), and reference test cases are proposed.
\end{abstract}

	\section{Introduction}\label{sec:Intro}	
	
	The behavior of elastic media is characterized by the stress-strain relationship, or constitutive law. For many materials such as rocks, soil, concrete and ceramics, it appears to be strongly nonlinear \cite{guyer99}, in the sense that nonlinearity occurs even when the deformations are small. Extensive acoustic experiments have been carried out on sandstones \cite{guyer99,johnson96a,johnson96b,abeele96} and on polycristalline zinc \cite{nazarov00}. In these experiments, the sample is a rod of material, which is resonating longitudinally.
	
	For this kind of experiments, one-dimensional geometries are often considered. Moreover, the small deformations hypothesis is commonly assumed. Therefore, the stress $\sigma$ is a function of the axial strain $\varepsilon$, for example a hyperbola, a hyperbolic tangent (tanh), or a polynomial function. Known as Landau's law \cite{landaulifschitz59}, the latter is widely used in the community of nondestructive testing \cite{mccall94b,lombardWM15}.
	
	Under these assumptions, elastodynamics write as a $2 \times 2$ hyperbolic system of conservation laws. For general initial data, no analytical solution is known when $\sigma(\varepsilon)$ is nonlinear. 
	Analytical solutions can be obtained in the particular case of piecewise constant initial data having a single discontinuity, i.e. the Riemann problem. Computing the solution to the Riemann problem is of major importance to get a theoretical insight on the wave phenomena, but also for validating numerical methods.
	
	When $\sigma$ is a convex or a concave function of $\varepsilon$, one can apply the techniques presented in \cite{godlewski96} for the $p$-system of barotropic gas dynamics. In this reference book, a condition which ensures the existence of the solution is presented. This condition has been omitted in \cite{meurer02}, in the case of the quadratic Landau's law. We prove that this kind of condition is obtained also in the case of elastodynamics, and that it involves also a restriction on the initial velocity jump. Furthermore, it is shown in \cite{godlewski96} how to predict the nature of the physically admissible solution in the case of the $p$-system. We present here how it can be applied to elastodynamics.
	
	When $\sigma$ has an inflexion point, it is neither convex nor concave. The physically admissible solution is much more complex than for the $p$-system, but the mathematics of nonconvex Riemann problems are well established \cite{wendroff72a,liu74,dafermos05}. It has been applied to elastodynamics in \cite{shearer95}, but with a negative Young's modulus, which is not physically relevant. Here, we state a condition which ensures the existence of the solution to the Riemann problem. Also, we show how to predict the nature of the physically admissible solution. Finally, we provide a systematic procedure to solve the Riemann problem analytically, whenever $\sigma$ has an inflexion point or not. In the case of Landau's law, an interactive application and a Matlab toolbox can be found at {\color{blue}\url{http://gchiavassa.perso.centrale-marseille.fr/RiemannElasto/}}.
	
	
	\section{Preliminaries}\label{sec:Prelim}
	
	\subsection{Problem statement}\label{subsec:ProbStatement}
	
	Let us consider an homogeneous one-dimensional continuum. The Lagrangian representation of the displacement field is used. Under the assumption of small deformations, the mass density is constant. Therefore, it equals the density $\rho_0$ of the reference configuration. Elastodynamics write as a $2\times 2$ system:
	\begin{equation}
		\left\lbrace
		{\addtolength{\jot}{0.5em}
		\begin{aligned}
			&\frac{\partial \varepsilon}{\partial t} = \frac{\partial v}{\partial x}\, ,\\
			&\rho_0\, \frac{\partial v}{\partial t} = \frac{\partial}{\partial x}\,\sigma(\varepsilon)\, .
		\end{aligned}}
		\right.
		\label{SystHyp}
	\end{equation}
	If $u$ denotes the $x$-component of the displacement field, then $\varepsilon = \partial u/\partial x$ is the infinitesimal strain, and $v = \partial u/\partial t$ is the particle velocity. We assume that the stress $\sigma$ is a smooth function of $\varepsilon$, which is strictly increasing over an open interval $\left]\varepsilon_\text{\it inf},\varepsilon_\text{\it sup}\right[$ with $\varepsilon_\text{\it inf}<0$ and $\varepsilon_\text{\it sup}>0$. These bounds $\varepsilon_\text{\it inf}$ and $\varepsilon_\text{\it sup}$ can be finite or infinite. Also, no prestress is applied, i.e. $\sigma(0)=0$. When replacing $\varepsilon$ by the specific volume $v$, $-\sigma$ by the pressure $p$, $v$ by the particle velocity $u$ and $\rho_0$ by $1$ in (\ref{SystHyp}), the so-called ``$p$-system'' of gas dynamics is recovered~\cite{wendroff72a}.
		
	As a set of conservation equations, the system (\ref{SystHyp}) can be written in the form
	\begin{equation}
		\frac{\partial}{\partial t} \bm{U} + \frac{\partial}{\partial x} \bm{f}(\bm{U}) = \bm{0}\, ,
		\label{SystHypVect}
	\end{equation}
	where $\bm{U}=(\varepsilon,v)^\top$ and $\bm{f}(\bm{U})=-(v,\sigma(\varepsilon)/\rho_0)^\top$. The Riemann problem for this system is defined by the piecewise constant initial data
	\begin{equation}
		\bm{U}(x,0) =
		\left\lbrace\!
		\begin{array}{ll}
			\bm{U}_L &\mbox{if } x<0\, ,\\
			\bm{U}_R &\mbox{elsewhere}\, ,
		\end{array}\right.
		\label{SystCI}
	\end{equation}
	with $\bm{U}_L = (\varepsilon_L, v_L)^\top$ and $\bm{U}_R = (\varepsilon_R, v_R)^\top$. Solving (\ref{SystHypVect})-(\ref{SystCI}) is the goal of the next sections.
	
	\subsection{Characteristic fields}\label{subsec:CharFields}
	
	The Jacobian matrix of $\bm{f}$ in (\ref{SystHypVect}) is
	\begin{equation}
	\bm{f}'(\bm{U})
	=
	-\left(\!{\renewcommand{\arraystretch}{1}
		\begin{array}{cc}
		0 & 1 \\
		\sigma'(\varepsilon)/\rho_0 & 0
		\end{array}}\!\right)
	\label{SystHypJacob}
	\end{equation}
	with eigenvalues
	\begin{equation}
	\lambda_1(\bm{U}) = {-c}(\varepsilon),\qquad \lambda_2(\bm{U}) = c(\varepsilon) ,
	\label{SystHypValP}
	\end{equation}
	where
	\begin{equation}
	c(\varepsilon) = \sqrt{\sigma'(\varepsilon)/\rho_0}
	\label{SystHypValPC}
	\end{equation}
	is the speed of sound. The right eigenvectors $\bm{r}_p$ and left eigenvectors $\bm{l}_p$ satisfy ($p=1$ or $p=2$)
	\begin{equation}
	{\addtolength{\jot}{0.3em}
		\begin{aligned}
		&\bm{f}'(\bm{U})\, \bm{r}_p(\bm{U}) = \lambda_p(\bm{U})\, \bm{r}_p(\bm{U})\, ,\\
		&{\bm{l}_p}(\bm{U})^\top\, \bm{f}'(\bm{U}) = \lambda_p(\bm{U})\, {\bm{l}_p}(\bm{U})^\top .
		\end{aligned}}
	\label{EigVectors}
	\end{equation}
	They can be normalized in such a way that ${\bm{l}_p}(\bm{U})^\top \,\bm{r}_p(\bm{U}) = 1$. Thus,
	\begin{equation}
	{\addtolength{\jot}{0.3em}
		\begin{aligned}
		&\bm{r}_1(\bm{U}) =
		\left(\!{\renewcommand{\arraystretch}{1}
			\begin{array}{c}
			1\\
			c(\varepsilon)
			\end{array}}\!\right) ,
		\quad & &\bm{r}_2(\bm{U}) =
		\left(\!{\renewcommand{\arraystretch}{1}
			\begin{array}{c}
			1\\
			{-c}(\varepsilon)
			\end{array}}\!\right) , \\
		&\bm{l}_1(\bm{U}) = \frac{1}{2}
		\left(\!{\renewcommand{\arraystretch}{1}
			\begin{array}{c}
			1\\
			1/c(\varepsilon)
			\end{array}}\!\right) ,
		\quad & &\bm{l}_2(\bm{U}) = \frac{1}{2}
		\left(\!{\renewcommand{\arraystretch}{1}
			\begin{array}{c}
			1\\
			-1/c(\varepsilon)
			\end{array}}\!\right) .
		\end{aligned}}
	\label{SystHypVectP}
	\end{equation}
	If the eigenvalues $\lambda_p$ of a $2\times 2$ system of conservation laws are real and distinct over an open set $\Omega$ of $\mathbb{R}^2$, then the system is \emph{strictly hyperbolic} over $\Omega$ \cite{godlewski96}. Here, the system~(\ref{SystHyp}) is strictly hyperbolic if $\sigma$ is strictly increasing, i.e. $\Omega=\left]\varepsilon_\textit{inf},\varepsilon_\text{\it sup}\right[ \times\mathbb{R}$.
	
	If the $p$th characteristic field satisfies $\bm{\nabla} \lambda_p\cdot \bm{r}_p = 0$ for all states $\bm{U}$ in $\Omega$, then it is \emph{linearly degenerate}. Based on (\ref{SystHypValP}), linear degeneracy reduces to
	\begin{equation}
		\sigma(\varepsilon) = E\,\varepsilon\, ,
		\label{ElastoLin}
	\end{equation}
	where $E>0$ is the Young's modulus. Therefore, (\ref{ElastoLin}) corresponds to the classical case of linear elasticity~\cite{achenbach73}. When linear degeneracy is not satisfied, the classical case is obtained when $\bm{\nabla} \lambda_p\cdot \bm{r}_p \neq 0$ for all states $\bm{U}$ in $\Omega$. The $p$th characteristic field is then \emph{genuinely nonlinear}. Here, this is equivalent to state for all $\varepsilon$ in $\left]\varepsilon_\textit{inf}, \varepsilon_\text{\it sup}\right[$,
	\begin{equation}
		\sigma''(\varepsilon) \neq 0 .
		\label{NLGen}
	\end{equation}
	Therefore $\sigma$ is either a strictly convex function or a strictly concave function. In the case of linear elasticity (\ref{ElastoLin}), one can remark that $\sigma$ is still convex. A less classical case is when both $\bm{\nabla} \lambda_p\cdot \bm{r}_p = 0$ and $\bm{\nabla} \lambda_p\cdot \bm{r}_p \neq 0$ can occur over $\Omega$. This happens when $\sigma''$ has isolated zeros. $\sigma$ is therefore neither convex nor concave. In this study, we restrict ourselves to a single inflexion point $\varepsilon_0$ in $\left]\varepsilon_\textit{inf},\varepsilon_\text{\it sup}\right[$ such that
	\begin{equation}
		\sigma''(\varepsilon_0) = 0\, .
		\label{NLNGen}
	\end{equation}
	
	Three constitutive laws $\varepsilon\mapsto \sigma(\varepsilon)$ have been chosen for illustrations. They cover all the cases related to convexity or to the hyperbolicity domain. Among them, the polynomial Landau's law is widely used in the experimental literature \cite{johnson96a,abeele96}, and the physical parameters given in table~\ref{tab:Params} correspond to typical values in rocks.
	
	\begin{table}
		\caption{Physical parameters.\label{tab:Params}}
		\centering
		{\renewcommand{\arraystretch}{1}
		\renewcommand{\tabcolsep}{0.1cm}
		\begin{tabular}{ccccc}
			\toprule
			$\rho_0$ (kg.m$^{-3}$) & $E$ (GPa) & $d$ & $\beta$ & $\delta$ \\
			\midrule
			$2600$ & $10$ & $10^{-3}$ & $10^2$ & $10^6$ \\
			\bottomrule
		\end{tabular}}
	\end{table}
		
	\begin{figure}
		\begin{minipage}{0.5\linewidth}
			\centering
			(a)
				
			\includegraphics{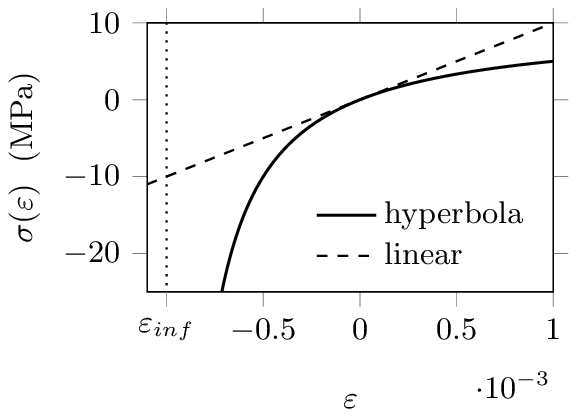}
		\end{minipage}
		\begin{minipage}{0.5\linewidth}
			\centering
			(b)
				
			\includegraphics{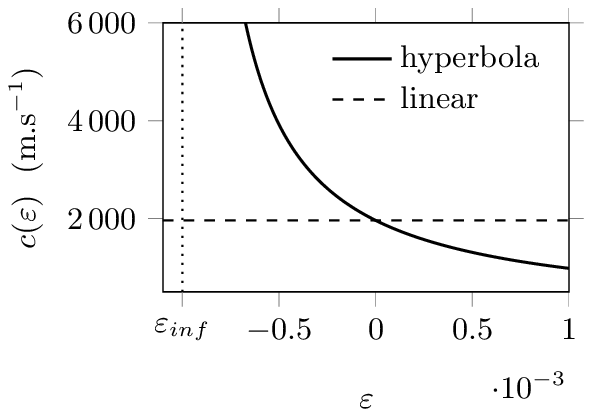}
		\end{minipage}
		\caption{(a) hyperbola constitutive law (\ref{BLawHyp}) and (b) speed of sound (\ref{CHyp}) compared to the linear case (\ref{ElastoLin}).\label{fig:Hyp}}
	\end{figure}
	
	\paragraph*{Model 1 {\mdseries (hyperbola)}.}
	This constitutive law writes
	\begin{equation}
		\sigma(\varepsilon) = \frac{E\, \varepsilon}{1+\varepsilon/d}\, ,
		\label{BLawHyp}
	\end{equation}
	where $d>0$. Here, $\left]\varepsilon_\textit{inf},\varepsilon_\text{\it sup}\right[ = \left]-d,+\infty\right[$. At the bound $\varepsilon_\textit{inf}$, $\sigma$ has a vertical asymptote. Figure~\ref{fig:Hyp} displays the law (\ref{BLawHyp}) and its sound speed
	\begin{equation}
		c(\varepsilon)=\frac{c_0}{1+\varepsilon/d}\, ,
		\label{CHyp}
	\end{equation}
	where
	\begin{equation}
		c_0 = \sqrt{E/\rho_0}
		\label{Clin}
	\end{equation}
	is the speed of sound in the linear case (\ref{ElastoLin}).	Since $\sigma''$ does not vanish over $\left]\varepsilon_\textit{inf},\varepsilon_\text{\it sup}\right[$, the characteristic fields are genuinely nonlinear (\ref{NLGen}).
		
	\begin{figure}
		\begin{minipage}{0.5\linewidth}
			\centering
			(a)
				
			\includegraphics{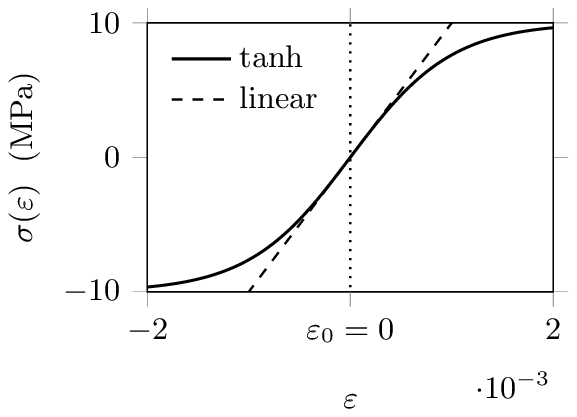}
		\end{minipage}
		\begin{minipage}{0.5\linewidth}
			\centering
			(b)
				
			\includegraphics{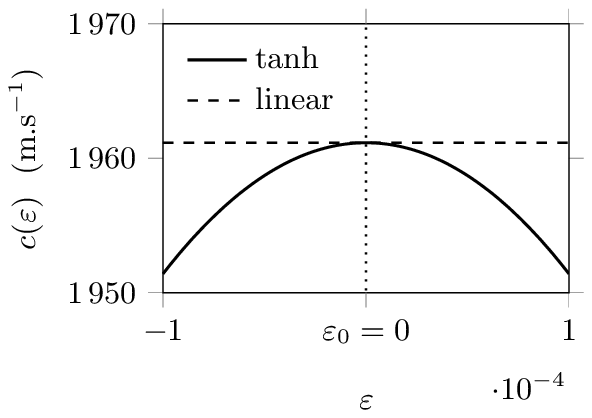}
		\end{minipage}
		\caption{(a) tanh constitutive law (\ref{BLawHypTan}) and (b) speed of sound (\ref{CHypTan}), zoom. \label{fig:HypTan}}
	\end{figure}
		
	\paragraph*{Model 2 {\mdseries (tanh)}.}
	This constitutive law writes
	\begin{equation}
		\sigma(\varepsilon) = E\, d\, \tanh(\varepsilon/d)\, ,
		\label{BLawHypTan}
	\end{equation}
	where $d>0$. Figure \ref{fig:HypTan} displays the law (\ref{BLawHypTan}) and its sound speed
	\begin{equation}
		c(\varepsilon)= \frac{c_0}{\cosh(\varepsilon/d)} \, .
		\label{CHypTan}
	\end{equation}
	Strict hyperbolicity is ensured for all $\varepsilon$ in $\mathbb{R}$. Among the constitutive laws considered here, the tanh is the only model with an unbounded hyperbolicity domain. However, $\sigma''(\varepsilon_0)=0$ at $\varepsilon_0=0$. Therefore, genuine nonlinearity is not satisfied (\ref{NLNGen}). At $\varepsilon_0$, the sound speed reaches its maximum $c(\varepsilon_0) = c_0$ (\ref{Clin}) .
		
	\paragraph*{Model 3 {\mdseries (Landau)}.}
	This constitutive law writes~\cite{landaulifschitz59}
	\begin{equation}
		\sigma(\varepsilon) = E\, \varepsilon\left(1-\beta\,\varepsilon-\delta\,\varepsilon^2\right) ,
		\label{BLawLandau}
	\end{equation}
	where $E$ is the Young's modulus and $(\beta,\delta)$ are positive. Figure~\ref{fig:Landau} represents the constitutive law (\ref{BLawLandau}) and its sound speed
	\begin{equation}
		c(\varepsilon)=c_0\, \sqrt{1-2\beta\,\varepsilon-3\delta\,\varepsilon^2}\, .
		\label{CLandau}
	\end{equation}
	In the particular case where the nonlinearity in (\ref{BLawLandau}) is quadratic ($\delta= 0$), the hyperbolicity domain is $\left]\varepsilon_\textit{inf},\varepsilon_\text{\it sup}\right[ = \left]{-\infty},1/2\beta\right[$. At the bound $\varepsilon_\textit{inf}$, $\sigma$ has a zero slope. A truncated Taylor expansion of the hyperbola law (\ref{BLawHyp}) at $\varepsilon=0$ recovers the quadratic Landau's law when replacing $\beta$ by $1/d$. Both laws $\varepsilon\mapsto \sigma(\varepsilon)$ are strictly concave, and their characteristic fields are genuinely nonlinear. When the nonlinearity is cubic ($\delta\neq 0$), hyperbolicity is satisfied when $\varepsilon$ belongs to
	\begin{equation}
		\left] \varepsilon_\textit{inf}, \varepsilon_\textit{sup} \right[
		= \left] \frac{1}{\beta-\sqrt{\beta^2+3\delta}}, \frac{1}{\beta+\sqrt{\beta^2+3\delta}} \right[ .
		\label{OmegaLandau}
	\end{equation}
	At the bounds $\varepsilon_\textit{inf}$ and $\varepsilon_\textit{sup}$, $\sigma$ has a zero slope. A truncated Taylor expansion of the tanh model (\ref{BLawHypTan}) at $\varepsilon=0$ recovers Landau's law when replacing $\beta$ by $0$ and $\delta$ by $1/d^2$. Both laws $\varepsilon\mapsto \sigma(\varepsilon)$ have an inflexion point $\varepsilon_0$, and their characteristic fields are not genuinely nonlinear (\ref{NLNGen}). Here, $\sigma''(\varepsilon_0)=0$ at $\varepsilon_0 = -\beta/3\delta$, where the sound speed reaches its maximum value
	\begin{equation}
		c(\varepsilon_0) = c_0\,\sqrt{1+\frac{\beta^2}{3\delta}}\, > c_0\, .
		\label{CLandauMax}
	\end{equation}
	
	\begin{figure}
		\begin{minipage}{0.5\linewidth}
			\centering
			(a)
			
			\includegraphics{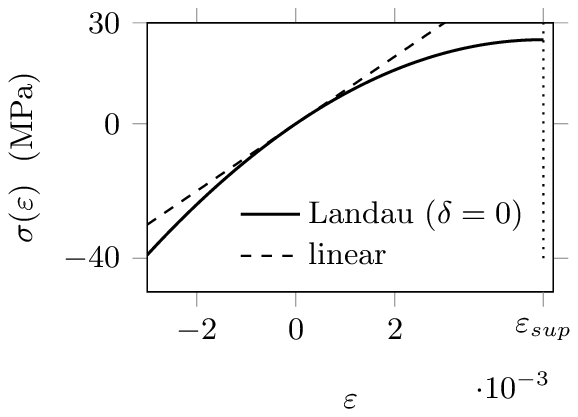}
		\end{minipage}
		\begin{minipage}{0.5\linewidth}
			\centering
			(b)
			
			\includegraphics{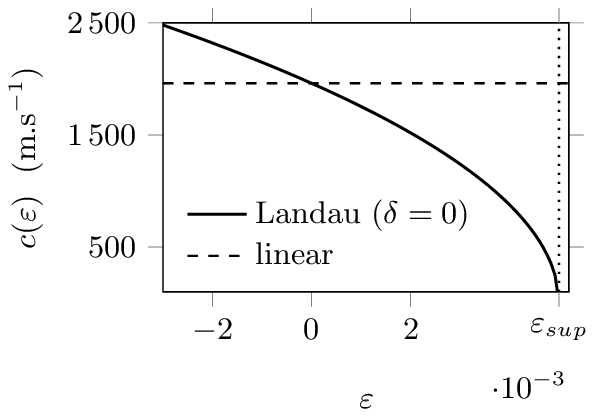}
		\end{minipage}
		\vspace{1em}
		
		\begin{minipage}{0.5\linewidth}
			\centering
			(c)
			
			\includegraphics{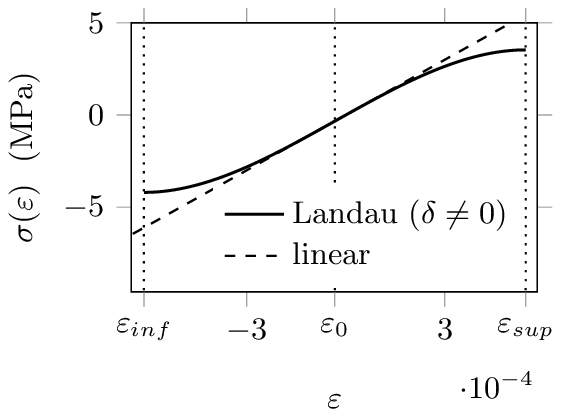}
		\end{minipage}
		\begin{minipage}{0.5\linewidth}
			\centering
			(d)
			
			\includegraphics{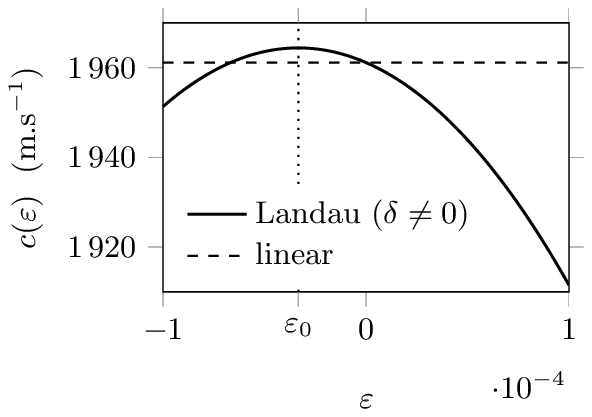}
		\end{minipage}
		\caption{(a) Landau's law (\ref{BLawLandau}) with a quadratic nonlinearity ($\delta = 0$) and (b) speed of sound (\ref{CLandau}). (c)-(d) Idem with a cubic nonlinearity ($\delta \neq 0$), zoom.\label{fig:Landau}}
	\end{figure}
	
	In the linearly degenerate case (\ref{ElastoLin}), the solution to the Riemann problem (\ref{SystHypVect})-(\ref{SystCI}) consists of two contact discontinuities propagating at speed $\mp c_0$. In the genuinely nonlinear case (\ref{NLGen}), the solution to the Riemann problem (\ref{SystHypVect})-(\ref{SystCI}) involves two waves associated to each characteristic field (figure~\ref{fig:Notations}-(a)), which can be either a shock or a rarefaction wave~\cite{godlewski96}. In the non-convex case (\ref{NLNGen}), compound waves made of both rarefaction and discontinuity may arise~\cite{wendroff72a}. These elementary solutions{\,---\,}discontinuities, rarefactions and compound waves{\,---\,}are examined separately in the next section. For this purpose, we study $p$-waves ($p=1$ or $p=2$) which connect a left state $\bm{U}_\ell$ and a right state $\bm{U}_r$ (see figure~\ref{fig:Notations}-(b)). Analytical expressions are detailed for the models 1, 2 and 3.
	
	\begin{figure}
		\begin{minipage}{0.5\linewidth}
			\centering
			(a)
			\vspace{0.5em}
			
			\includegraphics{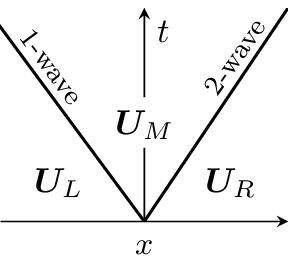}
		\end{minipage}
		\begin{minipage}{0.5\linewidth}
			\centering
			(b)
			\vspace{0.5em}
			
			\includegraphics{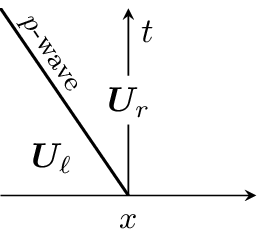}
		\end{minipage}
		\caption{(a) Structure of the solution to the Riemann problem. (b) Structure of an elementary solution in one characteristic field. If $p=1$, then $(\bm{U}_\ell, \bm{U}_r)=(\bm{U}_L, \bm{U}_M)$. If $p=2$, then $(\bm{U}_\ell, \bm{U}_r)=(\bm{U}_M, \bm{U}_R)$.\label{fig:Notations}}
	\end{figure}
		
	
	\section{Elementary solutions}\label{sec:ElemSol}
	
	\subsection{Discontinuities}\label{subsec:Discont}
	
	We are looking for piecewise constant solutions to the Riemann problem (\ref{SystHypVect})-(\ref{SystCI}) in one characteristic field ($p=1$ or $p=2$). They satisfy the Rankine-Hugoniot jump condition~\cite{godlewski96}
	\begin{equation}
		\bm{f}({\bm U}_r)-\bm{f}(\bm{U}_\ell) = s_p\, ({\bm U}_r - \bm{U}_\ell)\, ,
		\label{RH}
	\end{equation}
	from which one deduces
	\begin{equation}
		v_r = v_\ell - s_p\, (\varepsilon_r-\varepsilon_\ell) \, ,
		\label{EqRH1}
	\end{equation}
	with shock speeds
	\begin{equation}
		s_1 = -\sqrt{\frac{\sigma(\varepsilon_r)-\sigma(\varepsilon_\ell)}{\rho_0\, (\varepsilon_r-\varepsilon_\ell)}} \, ,
		\qquad
		s_2 = \sqrt{\frac{\sigma(\varepsilon_r)-\sigma(\varepsilon_\ell)}{\rho_0\, (\varepsilon_r-\varepsilon_\ell)}}
		\, .
		\label{EqRH2}
	\end{equation}
	As displayed on figure~\ref{fig:Lax}, the quantity $\rho_0\,{s_p}^2$ is the slope of the line connecting $(\varepsilon_\ell,\sigma(\varepsilon_\ell))$ and $(\varepsilon_r,\sigma(\varepsilon_r))$ in the $\varepsilon$-$\sigma$ plane. A discontinuity wave is the piecewise constant function defined by
	\begin{equation}
	{\bm U}(x,t)=
	\left\lbrace\!
	{\renewcommand{\arraystretch}{1.2}
		\begin{array}{ll}
		\bm{U}_\ell &\text{if } x < s_p\, t \, ,\\
		{\bm U}_r &\text{if } x > s_p\, t \, .
		\end{array}}\!\right.
	\label{SolRiemannChoc}
	\end{equation}
	It is a weak solution of the Riemann problem (\ref{SystHypVect})-(\ref{SystCI})~\cite{godlewski96}.
	
	The discontinuity (\ref{SolRiemannChoc}) may be not admissible. Indeed, such a weak solution of the Riemann problem is not necessarily the physical (entropic) solution. First, we examine the classical situation where the characteristic fields are either linearly degenerate or genuinely nonlinear.
	
	If the characteristic fields are linearly degenerate, a discontinuity is admissible if
	\begin{equation}
		\lambda_p(\bm{U}_\ell) = s_p = \lambda_p(\bm{U}_r) \, ,
		\label{LaxContact}
	\end{equation}
	i.e.
	\begin{equation}
		\sigma'(\varepsilon_\ell) = \frac{\sigma(\varepsilon_r)-\sigma(\varepsilon_\ell)}{\varepsilon_r-\varepsilon_\ell} = \sigma'(\varepsilon_r)\, .
		\label{LaxContactElasto}
	\end{equation}
	Then, the discontinuity (\ref{SolRiemannChoc}) is a contact discontinuity.
	
	If the characteristic fields are genuinely nonlinear, the discontinuity is admissible if and only if it satisfies the \emph{Lax entropy condition}~\cite{leveque02}
	\begin{equation}
		\lambda_p(\bm{U}_\ell) > s_p > \lambda_p(\bm{U}_r) \, .
		\label{Lax}
	\end{equation}
	If (\ref{Lax}) holds, then the discontinuity wave (\ref{SolRiemannChoc}) is a shock wave, and not a contact discontinuity (\ref{LaxContact}). The Lax entropy condition yields
	\begin{equation}
		\sigma'(\varepsilon_\ell) < \frac{\sigma(\varepsilon_r)-\sigma(\varepsilon_\ell)}{\varepsilon_r-\varepsilon_\ell} < \sigma'(\varepsilon_r) \qquad\text{if $p=1$,}
		\label{LaxElasto1}
	\end{equation}
	and
	\begin{equation}
		\sigma'(\varepsilon_\ell) > \frac{\sigma(\varepsilon_r)-\sigma(\varepsilon_\ell)}{\varepsilon_r-\varepsilon_\ell} > \sigma'(\varepsilon_r)
		\qquad\text{if $p=2$.}
		\label{LaxElasto2}
	\end{equation}
	An illustration is given on figure~\ref{fig:Lax} where $\sigma$ is concave. Graphically, it shows that the Lax entropy condition (\ref{Lax}) reduces to
	\begin{equation}
		s_p\,(\varepsilon_r - \varepsilon_\ell) > 0\, .
		\label{LaxElasto}
	\end{equation}
	
	\begin{figure}
		\begin{minipage}{0.5\linewidth}
			\centering
			(a)
			\vspace{0.5em}
			
			\includegraphics{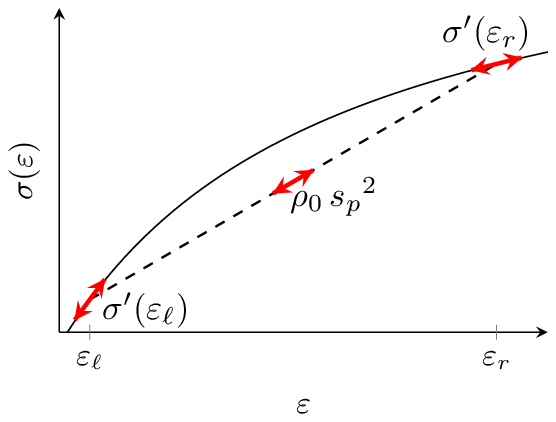}
		\end{minipage}
		\begin{minipage}{0.5\linewidth}
			\centering
			(b)
			\vspace{0.5em}
			
			\includegraphics{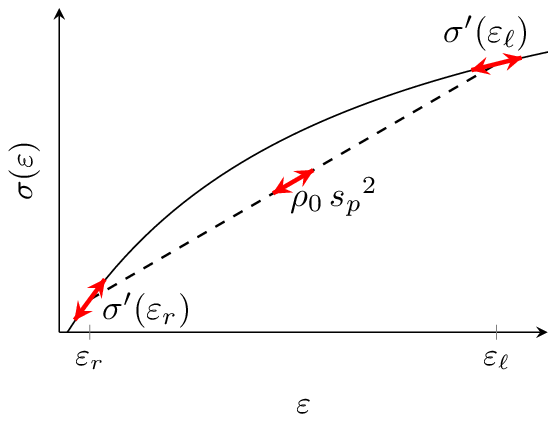}
		\end{minipage}
		\caption{Sketch of $\sigma$ between $\varepsilon_\ell$ and $\varepsilon_r$ if (a) $\varepsilon_\ell < \varepsilon_r$ and (b) $\varepsilon_\ell > \varepsilon_r$. In the case of 1-shocks, (a) is not admissible and (b) is admissible in the sense of Lax (\ref{LaxElasto1}). In the case of 2-shocks, (a) is admissible and (b) is not admissible (\ref{LaxElasto2}).\label{fig:Lax}}
	\end{figure}
	
	When the characteristic fields are neither linearly degenerate nor genuinely nonlinear, a $p$-discontinuity must satisfy the \emph{Liu entropy condition} (equation (E) in \cite{liu74}). In the case of elasticity, it writes
	\begin{equation}
	{\addtolength{\jot}{0.1em}
		\begin{aligned}
		&s_1 \geqslant -\sqrt{\frac{\sigma(\varepsilon)-\sigma(\varepsilon_\ell)}{\rho_0\, (\varepsilon-\varepsilon_\ell)}} &\text{if } p=1\, ,\\
		&s_2 \leqslant \sqrt{\frac{\sigma(\varepsilon)-\sigma(\varepsilon_\ell)}{\rho_0\, (\varepsilon-\varepsilon_\ell)}} &\text{if } p=2\, ,
		\end{aligned}}
	\label{Liu}
	\end{equation}
	for all $\varepsilon$ between $\varepsilon_\ell$ and $\varepsilon_r$. In (\ref{Liu}), $s_1$ and $s_2$ are given by (\ref{EqRH2}). In general, the Liu's entropy condition (\ref{Liu}) is stricter than Lax's shock inequalities (\ref{LaxElasto1})-(\ref{LaxElasto2}), but in the genuinely nonlinear case (\ref{NLGen}), both are equivalent. A geometrical interpretation of (\ref{Liu}) can be stated as follows (section 8.4 in \cite{dafermos05}):
	\begin{itemize}
		\item if $s_p\, (\varepsilon_r - \varepsilon_\ell)<0$, the $p$-discontinuity that joins $\bm{U}_\ell$ and $\bm{U}_r$ is admissible if the graph of $\sigma$ between $\varepsilon_\ell$ and $\varepsilon_r$ lies below the chord that connects $(\varepsilon_\ell,\sigma(\varepsilon_\ell))$ to $(\varepsilon_r,\sigma(\varepsilon_r))$;
		\item if $s_p\, (\varepsilon_r - \varepsilon_\ell)>0$, the $p$-discontinuity that joins $\bm{U}_\ell$ and $\bm{U}_r$ is admissible if the graph of $\sigma$ between $\varepsilon_\ell$ and $\varepsilon_r$ lies above the chord that connects $(\varepsilon_\ell,\sigma(\varepsilon_\ell))$ to $(\varepsilon_r,\sigma(\varepsilon_r))$.
	\end{itemize}
	
	\begin{figure}
		\centering
		\includegraphics{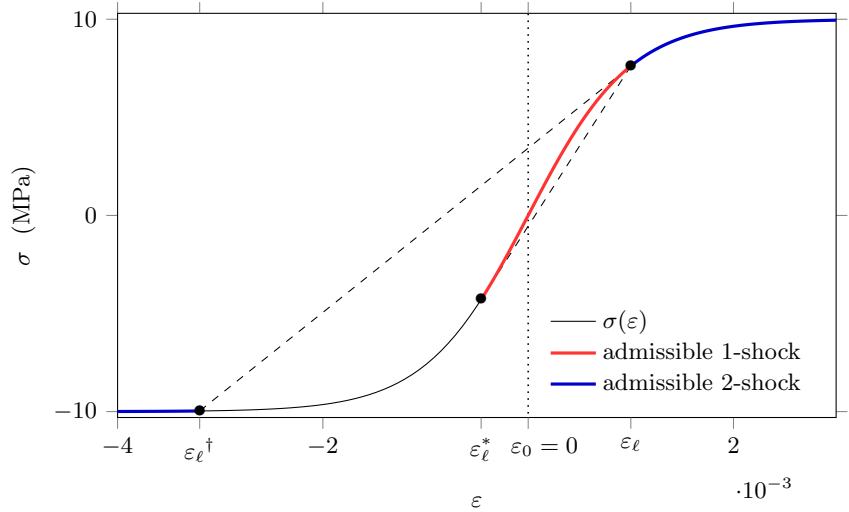}
		
		\caption{Admissibility of shocks in the sense of Liu (\ref{Liu}) for the tanh constitutive law (\ref{BLawHypTan}) with $\varepsilon_\ell=10^{-3}$.\label{fig:Liu1}}
	\end{figure}
	
	To carry out this interpretation in the nonconvex case, one needs the fonction $F$ defined for $a\neq b$ by
	\begin{equation}
		F : (a,b) \mapsto \sigma'(a) - \frac{\sigma(a)-\sigma(b)}{a-b} \, .
		\label{fonctF}
	\end{equation}
	Also, we denote by $a^\dagger$ and $b^*$ the points such that
	\begin{equation}
		F(a,a^\dagger) = 0
		\qquad\text{and}\qquad
		F(b^*,b) = 0 \, .
		\label{EqWendroff}
	\end{equation}
	By construction, one has $(a^\dagger)^* = (a^*)^\dagger = a$. Then, the geometrical interpretation of Liu's entropy condition (\ref{Liu}) is illustrated on figure~\ref{fig:Liu1}, where $\sigma$ is convex for $\varepsilon<\varepsilon_0$ and concave for $\varepsilon>\varepsilon_0$ with $\varepsilon_0=0$ (\ref{NLNGen}). On this figure, $\varepsilon_\ell$ belongs to the concave part.	
	\begin{itemize}
		\item When $p=1$, then $s_p<0$ (\ref{EqRH2}). If $\varepsilon_r>\varepsilon_\ell$, the graph must lie below the chord, which is not possible due to the concavity. If $\varepsilon_r<\varepsilon_\ell$, the graph must lie above the chord, which is only possible if $\varepsilon_r\geqslant \varepsilon_\ell^*$ (\ref{EqWendroff}), where the chord is tangent to the curve at $\varepsilon_r=\varepsilon_\ell^*$.
	
		\item When $p=2$, then $s_p>0$ (\ref{EqRH2}). If $\varepsilon_r>\varepsilon_\ell$, the graph must lie above the chord, which is satisfied due to the concavity. If $\varepsilon_r<\varepsilon_\ell$, the graph must lie below the chord, which is only possible if $\varepsilon_r\leqslant {\varepsilon_\ell}^\dagger$ (\ref{EqWendroff}), such that the chord is tangent to the curve at $\varepsilon_\ell$.
	\end{itemize}
	When $\varepsilon_\ell$ belongs to the convex part, one can carry out a similar analysis to describe the admissibility of $p$-discontinuities. The result is the same, but inequalities are of opposite sense. Finally, after multiplication by $(\varepsilon_\ell - \varepsilon_0)$, one obtains the inequalities ensuring that a $p$-discontinuity is admissible:
	\begin{equation}
	{\addtolength{\jot}{0.1em}
		\begin{aligned}
		&(\varepsilon_\ell-\varepsilon_0)\, \varepsilon_\ell^* \leqslant (\varepsilon_\ell-\varepsilon_0)\, \varepsilon_r < (\varepsilon_\ell-\varepsilon_0)\, \varepsilon_\ell
		&\text{if}\quad p=1\, ,\\
		&(\varepsilon_\ell-\varepsilon_0)\, \varepsilon_\ell < (\varepsilon_\ell-\varepsilon_0)\, \varepsilon_r
		\quad\text{or}\quad
		(\varepsilon_\ell-\varepsilon_0)\, \varepsilon_r \leqslant (\varepsilon_\ell-\varepsilon_0)\, {\varepsilon_\ell}^\dagger
		&\text{if}\quad p=2\, .
		\label{LiuConv}
		\end{aligned}}
	\end{equation}
	For more than one inflexion point, contact discontinuities (\ref{LaxContact}) may be admissible in the sense of Liu (\ref{Liu}). Here, only one inflexion point is considered. In this case, no contact discontinuity is admissible.
	
	Now, we put $\bm{U}_\ell$ in the $\varepsilon$-$v$ plane, and we construct the locus of right states $\bm{U}$ which can be connected to $\bm{U}_\ell$ through a $p$-discontinuity. The jump between $\bm{U}_\ell$ and $\bm{U}$ must satisfy the Rankine-Hugoniot condition (\ref{RH}). Thus, we obtain the curves $\mathcal{S}_p\!\left(\bm{U}_\ell\right)$ called $p$-Hugoniot loci and denoted by $\mathcal{S}_p^\ell$ for the sake of simplicity:
	\begin{equation}
		\begin{aligned}
			&v = v_\ell + \,\text{sgn}(\varepsilon-\varepsilon_\ell) \sqrt{(\sigma(\varepsilon)-\sigma(\varepsilon_\ell)) (\varepsilon-\varepsilon_\ell) / \rho_0} \equiv \mathcal{S}_1^\ell(\varepsilon) \, ,\\
			&v = v_\ell - \,\text{sgn}(\varepsilon-\varepsilon_\ell) \sqrt{(\sigma(\varepsilon)-\sigma(\varepsilon_\ell)) (\varepsilon-\varepsilon_\ell) / \rho_0} \equiv \mathcal{S}_2^\ell(\varepsilon) \, .
		\end{aligned}
		\label{EqRH}
	\end{equation}
	A few properties of these curves are detailed in appendix~\ref{subsec:WaveCurves}.
	
	\paragraph*{Model 1 {\mdseries (hyperbola)}.}
	Since $\sigma$ is concave, the Liu condition (\ref{Liu}) amounts to the Lax condition (\ref{Lax}), which reduces to $s_p\,(\varepsilon_r - \varepsilon_\ell) > 0$ (\ref{LaxElasto}).
	
	\paragraph*{Model 2 {\mdseries (tanh)}.}
	Here, $\sigma''$ is strictly decreasing and equals zero at $\varepsilon_0=0$. The stress $\sigma$ is convex if $\varepsilon<\varepsilon_0$ and concave if $\varepsilon>\varepsilon_0$. Therefore, Liu's entropy condition reduces to (\ref{LiuConv}). An illustration is given on figure~\ref{fig:Liu1}, where $\varepsilon_\ell = 10^{-3}$.
	
	\paragraph*{Model 3 {\mdseries (Landau)}.}
	If $\delta=0$ in (\ref{BLawLandau}), $\sigma$ is concave. Similarly to model~1, a $p$-shock is admissible if $s_p\,(\varepsilon_r - \varepsilon_\ell) > 0$ (\ref{LaxElasto}). Else, $\sigma''$ is strictly decreasing and equals zero at $\varepsilon_0 = -\beta/3\delta$. Then, Landau's law is similar to the tanh (model~2), and Liu's condition implies (\ref{LiuConv}).
	
	\subsection{Rarefaction waves}\label{subsec:Rar}
	
	We are looking for piecewise smooth continuous solutions of (\ref{SystHypVect})-(\ref{SystCI}) which connect $\bm{U}_\ell$ and $\bm{U}_r$. Since the system of conservation laws is invariant under uniform stretching of space and time coordinates $(x,t) \mapsto (\alpha x,\alpha t)$, we restrict ourselves to self-similar solutions of the form
	\begin{equation}
		\bm{U}(x,t)=\bm{V}(\xi) , \quad\text{where}\quad \xi=x/t\, .
		\label{SimpleAutosimilar}
	\end{equation}
	Injecting (\ref{SimpleAutosimilar}) in (\ref{SystHypVect}) gives two equations satisfied by $\bm{V}'(\xi)$. The trivial solution $\bm{V}'(\xi)=\bm{0}$ is eliminated. Differentiating the other equation implies that there exists $p\in\left\lbrace 1, 2\right\rbrace$, such that (section I.3.1 in \cite{godlewski96})
	\begin{equation}
		\left\lbrace
		\begin{aligned}
		&\lambda_p\!\left(\bm{V}(\xi)\right)=\xi \, ,\\
		&\bm{V}'(\xi) = \frac{1}{\bm{\nabla}\lambda_p\!\left(\bm{V}(\xi)\right) \!\cdot  \bm{r}_p\!\left(\bm{V}(\xi)\right)}\,\bm{r}_p\!\left(\bm{V}(\xi)\right) \, .
		\end{aligned}\right.
		\label{EqRarefaction1}
		\end{equation}
	To connect left and right states, we impose that $\bm{V}(\lambda_p(\bm{U}_\ell))={\bm U}_\ell$ and $\bm{V}(\lambda_p(\bm{U}_r))=\bm{U}_r$. Then, the function
	\begin{equation}
		\bm{U}(x,t)=
		\left\lbrace\!
		{\renewcommand{\arraystretch}{1.2}
			\begin{array}{ll}
			\bm{U}_\ell &\text{if } x \leqslant \lambda_p(\bm{U}_\ell)\, t \, ,\\
			\bm{V}(x/t) &\text{if } \lambda_p(\bm{U}_\ell)\, t \leqslant x \leqslant \lambda_p(\bm{U}_r)\, t \, ,\\
			{\bm U}_r &\text{if } \lambda_p(\bm{U}_r)\, t \leqslant x \, ,
			\end{array}}\!\right.
		\label{SolRiemannRarefaction}
	\end{equation}
	is a self-similar weak solution of (\ref{SystHypVect})-(\ref{SystCI}) connecting ${\bm U}_\ell$ and $\bm{U}_r$~\cite{godlewski96}. Such a solution is called simple wave or rarefaction wave. To be admissible, the eigenvalue $\lambda_p(\bm{V}(\xi))$ must be increasing from $\xi = \lambda_p(\bm{U}_\ell)$ to $\xi = \lambda_p(\bm{U}_r)$. In particular, we must have
	\begin{equation}
		\lambda_p(\bm{U}_\ell) \leqslant \lambda_p(\bm{U}_r) \, .
		\label{pseudoLax}
	\end{equation}
	Furthermore, equation (\ref{EqRarefaction1}) requires that $\bm{\nabla} \lambda_p\cdot \bm{r}_p$ does not vanish along the curve $\xi\mapsto \bm{V}(\xi)$. This is never satisfied when the characteristic fields are linearly degenerate, but it is always satisfied when the characteristic fields are genuinely nonlinear. In the nonconvex case (\ref{NLNGen}), it implies that a rarefaction cannot cross the inflection point $\varepsilon_0$:
	\begin{equation}
		(\varepsilon_\ell-\varepsilon_0)(\varepsilon_r-\varepsilon_0)\geqslant 0 \, .
		\label{pseudoLaxNGNL}
	\end{equation}
	
	Let us define a primitive $C$ of the sound speed $c$ over $\left]\varepsilon_\textit{inf},\varepsilon_\text{\it sup}\right[$. Then, the $p$-Riemann invariants
	\begin{equation}
		w_1\!\left(\bm{U}\right) = v - C(\varepsilon)\, ,\qquad
		w_2\!\left(\bm{U}\right) = v + C(\varepsilon)\, ,
		\label{RiemannInvar}
	\end{equation}
	are constant on $p$-rarefaction waves \cite{godlewski96}. In practice, this property is used to rewrite (\ref{EqRarefaction1}) as
	\begin{equation}
	\left\lbrace
	{\addtolength{\jot}{0.1em}
		\begin{aligned}
		&\lambda_p\!\left(\bm{V}(\xi)\right)=\xi\, ,\\
		&w_p(\bm{V}(\xi)) = w_p(\bm{U}_\ell)\, .
		\end{aligned}}\right.
	\label{EqRarefaction2}
	\end{equation}
	Finally, using the expressions of the eigenvalues (\ref{SystHypValP}) and the Riemann invariants (\ref{RiemannInvar}), one obtains
	\begin{equation}
	{\addtolength{\jot}{0.1em}
		\begin{aligned}
		&\bm{V}(\xi) =
		\left(
		{\renewcommand{\arraystretch}{1.2}
			\!\begin{array}{c}
			c^{-1} \!\left(-\xi\right)\\
			w_1(\bm{U}_\ell) + C\circ c^{-1} \!\left(-\xi\right)
			\end{array}\!}\right) &\text{if } p=1,
		\\
		&\bm{V}(\xi) =
		\left(
		{\renewcommand{\arraystretch}{1.2}
			\!\begin{array}{c}
			c^{-1} \!\left(\xi\right)\\
			w_2(\bm{U}_\ell) - C\circ c^{-1} \!\left(\xi\right)
			\end{array}\!}\right) &\text{if } p=2.
		\end{aligned}}
	\label{SolRarefaction}
	\end{equation}
	In (\ref{EqRarefaction2})-(\ref{SolRarefaction}), $\bm{U}_\ell$ can be replaced by $\bm{U}_r$, or by any other state on the rarefaction wave.
	
	Now, we put $\bm{U}_\ell$ in the $\varepsilon$-$v$ plane, and we construct the locus of right states $\bm{U}$ which can be connected to $\bm{U}_\ell$ through a $p$-rarefaction. The states $\bm{U}_\ell$ and $\bm{U}$ must satisfy $w_p\!\left(\bm{U}\right) = w_p\!\left(\bm{U}_\ell\right)$. Thus, we obtain the rarefaction curves $\mathcal{R}_p\!\left(\bm{U}_\ell\right)$ and denoted by $\mathcal{R}_p^\ell$ for the sake of simplicity:
	\begin{equation}
		{\addtolength{\jot}{0.1em}
		\begin{aligned}
			& v = v_\ell - C(\varepsilon_\ell) + C(\varepsilon) \equiv \mathcal{R}_1^\ell(\varepsilon)\, ,\\
			& v = v_\ell + C(\varepsilon_\ell) - C(\varepsilon) \equiv \mathcal{R}_2^\ell(\varepsilon)\, .
		\end{aligned}}
		\label{EqRarefaction}
	\end{equation}
	A few properties of these curves are detailed in appendix~\ref{subsec:WaveCurves}.
	
	\paragraph*{Model 1 {\mdseries (hyperbola)}.}
	To compute rarefaction waves, one needs the expressions of $C$ and $c^{-1}$ in (\ref{SolRarefaction}). For the hyperbola law, a primitive of the sound speed (\ref{CHyp}) is
	\begin{equation}
		C(\varepsilon)=d\, c_0 \ln (1+\varepsilon/d)\, ,
		\label{CintHyp}
	\end{equation}
	and the inverse function of $c$ is
	\begin{equation}
		c^{-1}(\xi)=d\left(\frac{c_0}{\xi}-1\right) .
		\label{CinvHyp}
	\end{equation}
	
	\paragraph*{Model 2 {\mdseries (tanh)}.}
	A primitive of the sound speed (\ref{CHypTan}) is
	\begin{equation}
		C(\varepsilon)= c_0\,d\,\arcsin(\tanh(\varepsilon/d))\, .
		\label{CintHypTan}
	\end{equation}
	Since $c$ is not monotonous (figure~\ref{fig:HypTan}-(b)), its inverse is not unique. The inverse over the range $[0,c_0]$ is made of two branches:
	\begin{equation}
		c^{-1}(\xi) \in \left\lbrace -d\,\text{arcosh}\!\left(\frac{c_0}{\xi}\right)\! , d\,\text{arcosh}\!\left(\frac{c_0}{\xi}\right) \right\rbrace .
		\label{CinvHypTan}
	\end{equation}
	The choice of the inverse (\ref{CinvHypTan}) in (\ref{SolRarefaction}) depends on $\varepsilon_\ell$. Indeed, $\bm{V}(\xi)$ must satisfy $\bm{V}(\lambda_p(\bm{U}_\ell))={\bm U}_\ell$ and $\bm{V}(\lambda_p(\bm{U}_r))=\bm{U}_r$, i.e. $\varepsilon_\ell = c^{-1}\circ c(\varepsilon_\ell)$ and $\varepsilon_r = c^{-1}\circ c(\varepsilon_r)$. Since $\varepsilon_\ell$ and $\varepsilon_r$ are on the same side of the inflection point (\ref{pseudoLaxNGNL}), the choice of the inverse in (\ref{CinvHypTan}) relies only on $\varepsilon_\ell$. If $\varepsilon_\ell<\varepsilon_0$, the inverse (\ref{CinvHypTan}) must be lower than $\varepsilon_0=0$ (first expression). Else, it must be larger (second expression).
	
	\paragraph*{Model 3 {\mdseries (Landau)}.}
	In the case of the quadratic nonlinearity ($\delta=0$), a primitive of the sound speed (\ref{CLandau}) is
	\begin{equation}
		C(\varepsilon)={-c_0}\, \frac{(1-2\beta\,\varepsilon)^{3/2}}{3\beta}\, ,
		\label{CintLandau1}
	\end{equation}
	and the inverse function of $c$ is
	\begin{equation}
		c^{-1}(\xi)=
		\frac{{c_0}^2-\xi^2}{2\beta\, {c_0}^2}\, .
		\label{CinvLandau1}
	\end{equation}
	In the case of the cubic nonlinearity ($\delta\neq 0$), a primitive of the sound speed (\ref{CLandau}) is
	\begin{equation}
		C(\varepsilon)=c(\varepsilon)\, \frac{\beta+3\delta\,\varepsilon}{6\delta} + c_0\, \frac{\beta^2+3\delta}{6\delta\sqrt{3\delta}} \arcsin\!\left(\frac{\beta+3\delta\,\varepsilon}{\sqrt{\beta^2+3\delta}}\right) .
		\label{CintLandau}
	\end{equation}
	Here too, $c$ is not monotonous (figure~\ref{fig:Landau}-(d)). The inverse over the range $[0,c(\varepsilon_0)]$ (see (\ref{CLandauMax})) is made of two branches:
	\begin{equation}
		c^{-1}(\xi) \in \left\lbrace
		-\frac{\beta}{3\delta} - \sqrt{\frac{\beta^2}{9\delta^2} + \frac{1}{3\delta} \!\left(1 - \frac{\xi^2}{{c_0}^2}\right)\!  } \, ,
		-\frac{\beta}{3\delta} + \sqrt{\frac{\beta^2}{9\delta^2} + \frac{1}{3\delta} \!\left(1 - \frac{\xi^2}{{c_0}^2}\right)\!  }\, \right\rbrace .
		\label{CinvLandau}
	\end{equation}
	The choice of the inverse in (\ref{SolRarefaction}) depends on $\varepsilon_\ell$. If $\varepsilon_\ell<\varepsilon_0$, the inverse (\ref{CinvLandau}) must be lower than $\varepsilon_0=-\beta/3\delta$ (first expression). Else, it must be larger (second expression).
	
	\subsection{Compound waves}\label{subsec:Compound}
	
	In this section, $\sigma$ has an inflection point at $\varepsilon_0$ (\ref{NLNGen}). The characteristic fields are thus not genuinely nonlinear over $\left]\varepsilon_\textit{inf}, \varepsilon_\text{\it sup}\right[$. On the one hand, a $p$-discontinuity which crosses the line $\varepsilon=\varepsilon_0$ is not always admissible (\ref{LiuConv}). On the other hand, a $p$-rarefaction cannot cross the line $\varepsilon=\varepsilon_0$ (\ref{pseudoLaxNGNL}). When discontinuities and rarefactions are not admissible, one can start from $\bm{U}_\ell$ with an admissible $p$-discontinuity and connect it to $\bm{U}_r$ with an admissible $p$-rarefaction (shock-rarefaction). Alternatively, one can start from $\bm{U}_\ell$ with an admissible $p$-rarefaction and connect it to $\bm{U}_r$ with an admissible $p$-discontinuity (rarefaction-shock). These compound waves composed of one rarefaction and one discontinuity are now examined separately.
	
	\paragraph*{Shock-rarefactions.}
	We consider a $p$-shock-rarefaction that connects $\bm{U}_\ell$ and $\bm{U}_r$. The rarefaction cannot cross the line $\varepsilon=\varepsilon_0$. It breaks when reaching $\varepsilon_\ell^*$ \cite{wendroff72a} such that $F(\varepsilon_\ell^*,\varepsilon_\ell)=0$ (\ref{EqWendroff}). Therefore, a shock-rarefaction is defined by
	\begin{equation}
	\bm{U}(x,t)=
	\left\lbrace\!
	{\renewcommand{\arraystretch}{1.2}
		\!\begin{array}{ll}
		\bm{U}_\ell &\text{if } x < \lambda_p(\varepsilon_\ell^*)\, t\, , \\
		{\bm V}(x/t) &\text{if } \lambda_p(\varepsilon_\ell^*)\, t < x \leqslant \lambda_p(\varepsilon_r)\, t\, , \\
		{\bm U}_r &\text{if } \lambda_p(\varepsilon_r)\, t \leqslant x\, .
		\end{array}}\!\right.
	\label{SolRiemannChocR}
	\end{equation}
	$\bm{V}(\xi)$ is given by (\ref{SolRarefaction}) where $\bm{U}_\ell$ has to be replaced by $\bm{U}_r$. An illustration is given on figure~\ref{fig:Compound}-(a), where the parameters are the same as in figure~\ref{fig:Lan1} (section~\ref{sec:NumEx}). If the shock-rarefaction (\ref{SolRiemannChocR}) is a weak solution of (\ref{SystHypVect})-(\ref{SystCI}), then both parts are weak solutions. On the one hand, the discontinuous part must satisfy the Rankine-Hugoniot condition (\ref{RH}) with left state $\bm{U}_\ell$ and right state $\bm{U}_\ell^* = (\varepsilon_\ell^*,v_\ell^*)^\top$:
	\begin{equation}
		v_\ell^* = v_\ell - s_p\,(\varepsilon_\ell^* - \varepsilon_\ell)\, .
		\label{SR-RH}
	\end{equation}
	Due to the relation (\ref{EqWendroff}) between $\varepsilon_\ell$ and $\varepsilon_\ell^*$, the shock speed $s_p$ (\ref{EqRH2}) satisfies
	\begin{equation}
		s_1 = -\sqrt{\frac{\sigma(\varepsilon_\ell^*)-\sigma(\varepsilon_\ell)}{\rho_0\, (\varepsilon_\ell^*-\varepsilon_\ell)}} = -c(\varepsilon_\ell^*) \, ,\qquad s_2 = \sqrt{\frac{\sigma(\varepsilon_\ell^*)-\sigma(\varepsilon_\ell)}{\rho_0\, (\varepsilon_\ell^*-\varepsilon_\ell)}} = c(\varepsilon_\ell^*) \, .
	\end{equation}
	On the other hand, the Riemann invariants (\ref{RiemannInvar}) must be constant on the continuous part:
	\begin{equation}
		w_p(\bm{U}_\ell^*) = w_p(\bm{U}_r)\, .
		\label{SR-Rar}
	\end{equation}
	Finally, equations (\ref{SR-RH}) and (\ref{SR-Rar}) yield
	\begin{equation}
		v_r = v_\ell - ({-1})^p \left(C(\varepsilon_r) - C(\varepsilon_\ell^*) + c(\varepsilon_\ell^*)(\varepsilon_\ell^*-\varepsilon_\ell)\right) .
		\label{pseudoRH_SR}
	\end{equation}
	Admissibility of shock-rarefactions is presented in section~\ref{subsec:GraphMeth}.
	
	Now, we put $\bm{U}_\ell$ in the $\varepsilon$-$v$ plane, and we construct the locus of right states $\bm{U}$ which can be connected to $\bm{U}_\ell$ through a $p$-shock-rarefaction. The states $\bm{U}_\ell$ and $\bm{U}$ must satisfy (\ref{pseudoRH_SR}). Thus, we obtain the shock-rarefaction curves $\mathcal{SR}_p\!\left(\bm{U}_\ell\right)$ and denoted by $\mathcal{SR}_p^\ell$ for the sake of simplicity:
	\begin{equation}
		\begin{aligned}
			&v = v_\ell + c(\varepsilon_\ell^*) (\varepsilon_\ell^* - \varepsilon_\ell) - C(\varepsilon_\ell^*) + C(\varepsilon) \equiv \mathcal{SR}_1^\ell(\varepsilon) \, ,\\
			&v = v_\ell - c(\varepsilon_\ell^*) (\varepsilon_\ell^* - \varepsilon_\ell) + C(\varepsilon_\ell^*) - C(\varepsilon) \equiv \mathcal{SR}_2^\ell(\varepsilon) \, .
		\end{aligned}
		\label{EqSR}
	\end{equation}
	
	\paragraph*{Rarefaction-shocks.}
	We consider a $p$-rarefaction-shock that connects $\bm{U}_\ell$ and $\bm{U}_r$. The rupture of the rarefaction wave occurs when reaching $\varepsilon_r^*$ \cite{wendroff72a} such that $F(\varepsilon_r^*,\varepsilon_r)=0$ (\ref{EqWendroff}). Therefore, a rarefaction-shock is defined by
	\begin{equation}
		\bm{U}(x,t)=
		\left\lbrace\!
		{\renewcommand{\arraystretch}{1.2}
			\!\begin{array}{ll}
			\bm{U}_\ell &\text{if } x \leqslant \lambda_p(\varepsilon_\ell)\, t\, ,\\
			\bm{V}(x/t) &\text{if } \lambda_p(\varepsilon_\ell)\, t\leqslant x < \lambda_p(\varepsilon_r^*)\, t\, ,\\
			{\bm U}_r &\text{if } \lambda_p(\varepsilon_r^*)\, t < x\, ,
			\end{array}}\!\right.
		\label{SolRiemannRChoc}
	\end{equation}
	where $\bm{V}(\xi)$ is given by (\ref{SolRarefaction}). An illustration is given on figure~\ref{fig:Compound}-(b), where the parameters are the same as in figure~\ref{fig:Lan1}. With similar arguments than for (\ref{SR-RH}) and (\ref{SR-Rar}), one obtains
	\begin{equation}
		v_r = v_\ell + ({-1})^p \left(C(\varepsilon_\ell) - C(\varepsilon_r^*) + c(\varepsilon_r^*)(\varepsilon_r^*-\varepsilon_r)\right) .
		\label{pseudoRH_RS}
	\end{equation}
	Admissibility of rarefaction-shocks is presented in section~\ref{subsec:GraphMeth}, where the computation of $\varepsilon^*$ is also examined.
	
	\begin{figure}
		\begin{minipage}{0.52\linewidth}
			\centering
			(a)
			
			\includegraphics{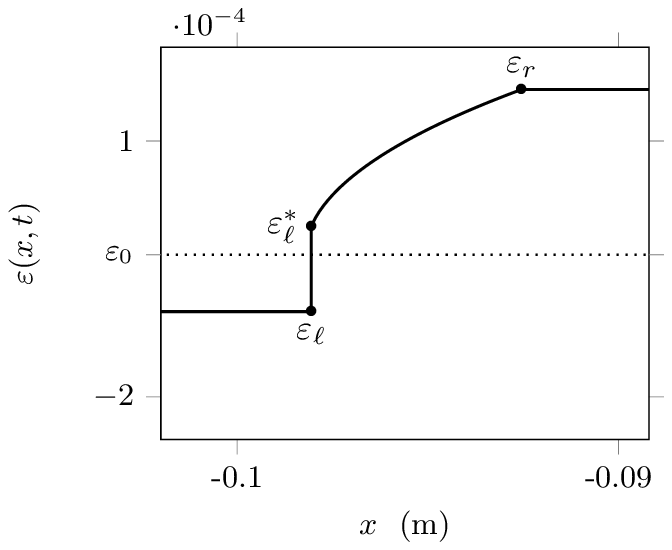}
		\end{minipage}
		\begin{minipage}{0.47\linewidth}
			\centering
			(b)
			
			\includegraphics{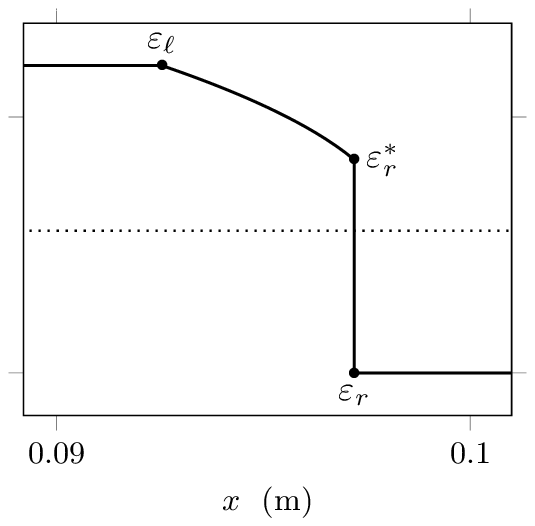}
		\end{minipage}
		
		\caption{Compound waves obtained with the cubic Landau's law (\ref{BLawLandau}). Snapshot of $\varepsilon$ in the case of (a) a 1-shock-rarefaction and (b) a 2-rarefaction-shock.\label{fig:Compound}}
	\end{figure}
	
	Now, we put $\bm{U}_\ell$ in the $\varepsilon$-$v$ plane, and we construct the locus of right states $\bm{U}$ which can be connected to $\bm{U}_\ell$ through a $p$-rarefaction-shock. The states $\bm{U}_\ell$ and $\bm{U}$ must satisfy (\ref{pseudoRH_RS}). Thus, we obtain the rarefaction-shock curves $\mathcal{RS}_p\!\left(\bm{U}_\ell\right)$ and denoted by $\mathcal{RS}_p^\ell$ for the sake of simplicity:
	\begin{equation}
		\begin{aligned}
			& v = v_\ell - C(\varepsilon_\ell) + C(\varepsilon^*) - c(\varepsilon^*) (\varepsilon^* - \varepsilon) \equiv \mathcal{RS}_1^\ell(\varepsilon) \, ,\\
			& v = v_\ell + C(\varepsilon_\ell) - C(\varepsilon^*) + c(\varepsilon^*) (\varepsilon^* - \varepsilon) \equiv \mathcal{RS}_2^\ell(\varepsilon) \, .
			\end{aligned}
		\label{EqRS}
	\end{equation}
	A few properties of these curves are detailed in appendix~\ref{subsec:WaveCurves}.
	
	\subsection{Graphical method}\label{subsec:GraphMeth}
	
	In practice, a graphical method can be applied to construct entropic elementary solutions to (\ref{SystHypVect})-(\ref{SystCI}) based on discontinuities, rarefactions and compound waves. This method is very useful for nonconvex constitutive equations $\varepsilon\mapsto \sigma(\varepsilon)$ and can be stated as follows (section 9.5 in \cite{dafermos05}):
	
	\vspace{1em}
	\noindent For 1-waves,
	\begin{itemize}
		\item if $\varepsilon_r < \varepsilon_\ell$, we construct the convex hull of $\sigma$ over $\left[\varepsilon_r, \varepsilon_\ell\right]$;
		\item if $\varepsilon_r > \varepsilon_\ell$, we construct the concave hull of $\sigma$ over $\left[\varepsilon_\ell, \varepsilon_r\right]$.
	\end{itemize}
	For 2-waves,
	\begin{itemize}
		\item if $\varepsilon_r < \varepsilon_\ell$, we construct the concave hull of $\sigma$ over $\left[\varepsilon_r, \varepsilon_\ell\right]$;
		\item if $\varepsilon_r > \varepsilon_\ell$, we construct the convex hull of $\sigma$ over $\left[\varepsilon_\ell, \varepsilon_r\right]$.
	\end{itemize}
	Between $\varepsilon_\ell$ and $\varepsilon_r$, the intervals where the slope of the hull is constant correspond to admissible discontinuities. The other intervals correspond to admissible rarefactions.
	
	On figure~\ref{fig:Liu2}, we illustrate the method for the tanh constitutive law (\ref{BLawHypTan}), where the inflexion point (\ref{NLNGen}) is $\varepsilon_0=0$. $\sigma$ is convex for $\varepsilon< \varepsilon_0$ and concave for $\varepsilon> \varepsilon_0$. Here, $\varepsilon_r={-1.7}\times 10^{-3}$ is smaller than $\varepsilon_\ell=1.2\times 10^{-3}$. If $p=1$, we construct the convex hull of $\sigma$, i.e. the biggest convex fonction which is lower or equal to $\sigma$. If $p=2$, we construct the concave hull of $\sigma$, i.e. the smallest concave fonction which is greater or equal to $\sigma$. The method predicts that a compound wave can either be a 1-shock-rarefaction or a 2-rarefaction-shock. Also, it is in agreement with the definitions of shock-rarefactions and rarefaction-shocks in the previous section, since the 1-rarefaction breaks when reaching $\varepsilon_\ell^*$ and the 2-rarefaction breaks when reaching $\varepsilon_r^*$.
	
	\begin{figure}
		\centering
		\includegraphics{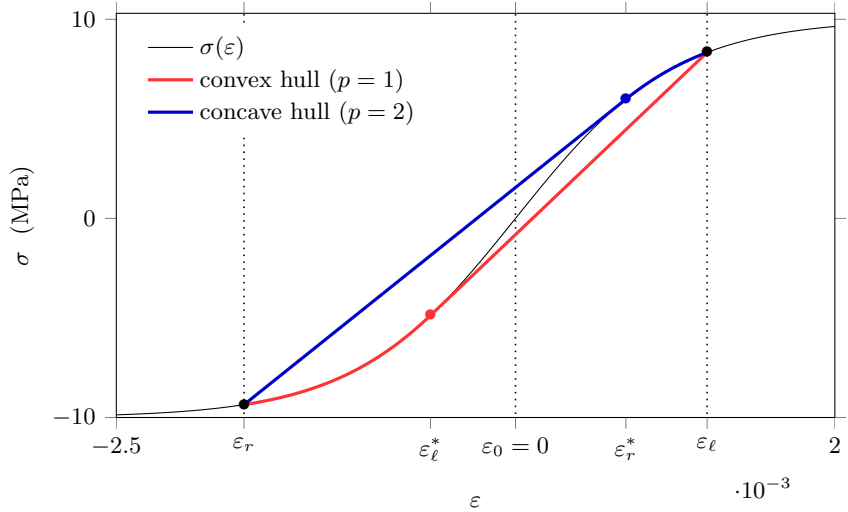}
		
		\caption{Construction of the solution for $\varepsilon_r<\varepsilon_\ell$ and the tanh constitutive law (\ref{BLawHypTan}). Here, we obtain a 1-shock-rarefaction (convex hull) and a 2-rarefaction-shock (concave hull).\label{fig:Liu2}}
	\end{figure}
		
	When $\varepsilon_r$ varies, the hulls on figure~\ref{fig:Liu2} vary. Depending on $\varepsilon_r$, one obtains different admissible waves (cf. table~\ref{tab:AdmWaves}).
	
	\begin{table}
		\caption{Admissible elementary waves for the tanh law (\ref{BLawHypTan}) when $\varepsilon_\ell>\varepsilon_0$ and $\varepsilon_r$ varies in $\mathbb{R}$ (increasing values of $\varepsilon_r$ from the left to the right).\label{tab:AdmWaves}}
		\centering
		{\begin{tabular}{lccccccccccc}
				\toprule
				& ${-\infty}$ & & ${\varepsilon_\ell}^\dagger$ & & $\varepsilon_\ell^*$ & & $\varepsilon_0=0$ & & $\varepsilon_\ell$ & & ${+\infty}$ \\
				\midrule
				$p=1$ & & $\mathcal{SR}_1$ & & $\mathcal{SR}_1$ & & $\mathcal{S}_1$ & & $\mathcal{S}_1$ & & $\mathcal{R}_1$ & \\
				$p=2$ & & $\mathcal{S}_2$ & & $\mathcal{RS}_2$ & & $\mathcal{RS}_2$ & & $\mathcal{R}_2$ & & $\mathcal{S}_2$ & \\
				\bottomrule
			\end{tabular}}
		\end{table}
		
	\paragraph*{Model 2 {\mdseries (tanh)}.}
	To compute the solution (\ref{SolRiemannRChoc}) or (\ref{SolRiemannChocR}), we need to solve (\ref{EqWendroff}). For the tanh constitutive law (\ref{BLawHypTan}), it yields
	\begin{equation}
		\sinh\!\left(\frac{\varepsilon^*-\varepsilon}{d}\right)\, - \frac{\cosh(\varepsilon/d)}{\cosh(\varepsilon^*\! /d)}\, \frac{\varepsilon^*-\varepsilon}{d} = 0 \, ,
		\label{EqWendroffHypTan}
	\end{equation}
	which can be solved iteratively, e.g. with Newton's method and the initial value $\varepsilon^*\simeq -\varepsilon/2$. This initial guess has been deduced from a Taylor expansion of (\ref{EqWendroffHypTan}) at $\varepsilon^*=\varepsilon=0$. 
		
	\paragraph*{Model 3 {\mdseries (Landau)}.}
	When $\delta=0$ in (\ref{BLawLandau}), $\sigma$ is concave and there are no compound waves. When $\delta > 0$ in (\ref{BLawLandau}), we are in a similar configuration than with tanh. Here, (\ref{EqWendroff}) can be solved analytically:
	\begin{equation}
		\varepsilon^* = -\frac{\varepsilon+\beta/\delta}{2}\, .
		\label{EqWendroffLandau}
	\end{equation}
	
	
	\section{Solution of the Riemann problem}\label{sec:Solution}
	
	\subsection{General strategy}\label{subsec:GenStrat}
	
	When (\ref{SystHypVect}) is a strictly hyperbolic system, the solution to the Riemann problem (\ref{SystHypVect})-(\ref{SystCI}) has three constant states $\bm{U}_L$, $\bm{U}_M$ and $\bm{U}_R$ (see figure~\ref{fig:Notations}-(a)). Here, every possible wave structure combining a 1-wave and a 2-wave must be examined. Since $\sigma$ has only one inflection point (\ref{NLNGen}), compound waves can only be composed of one rarefaction and one discontinuity.
	
	In order to find the intermediate state $\bm{U}_M$, we construct the forward wave curve $\Phi_p^L$ of right states $\bm{U}$ which can be connected to $\bm{U}_L$ through an admissible $p$-wave (sections 9.4-9.5 in \cite{dafermos05}). It satisfies:
	\begin{equation}
	\Phi_p^L(\varepsilon) =
	\left\lbrace
	{\addtolength{\jot}{0.1em}
		\begin{aligned}
		&\mathcal{S}_p^L(\varepsilon) & &\text{if admissible $p$-shock,} \\
		&\mathcal{R}_p^L(\varepsilon) & &\text{if admissible $p$-rarefaction,} \\
		&\mathcal{RS}_p^L(\varepsilon) & &\text{if admissible $p$-rarefaction-shock,} \\
		&\mathcal{SR}_p^L(\varepsilon) & &\text{if admissible $p$-shock-rarefaction.}
		\end{aligned}}
	\right.
	\label{Phip}
	\end{equation}
	According to equations (\ref{EqRH}), (\ref{EqRarefaction}), (\ref{EqRS}) and (\ref{EqSR}), this curve is only translated vertically when $v_L$ changes. Similarly, we construct the backward wave curve $\Psi_p^R$ of left states $\bm{U}$ which can be connected to $\bm{U}_R$ through an admissible $p$-wave:
	\begin{equation}
	\Psi_p^R(\varepsilon) =
	\left\lbrace
	{\addtolength{\jot}{0.1em}
		\begin{aligned}
		&\mathcal{S}_p^R(\varepsilon) & &\text{if admissible $p$-shock,} \\
		&\mathcal{R}_p^R(\varepsilon) & &\text{if admissible $p$-rarefaction,} \\
		&\mathcal{SR}_p^R(\varepsilon) & &\text{if admissible $p$-rarefaction-shock,} \\
		&\mathcal{RS}_p^R(\varepsilon) & &\text{if admissible $p$-shock-rarefaction.}
		\end{aligned}}
	\right.
	\label{Psip}
	\end{equation}
	Backward wave curves (\ref{Psip}) are obtained by replacing the elementary forward wave curves in (\ref{Phip}) by elementary backward wave curves. It amounts to replace rarefaction-shock curves by shock-rarefaction curves, and vice versa. Here too, the curve $\Psi_p^R$ is only translated vertically when $v_R$ changes. Also, one can remark that $v_R = \Phi_p^L (\varepsilon_R)$ is equivalent to $v_L = \Psi_p^R(\varepsilon_L)$.
	
	The intermediate state $\bm{U}_M$ is connected to $\bm{U}_L$ through an admissible 1-wave and to $\bm{U}_R$ through an admissible 2-wave. Thus, it satisfies
	\begin{equation}
		v_M = \Phi_1^L(\varepsilon_M) = \Psi_2^R(\varepsilon_M) \, ,
		\label{IntermediateState}
	\end{equation}
	or equivalently,
	\begin{equation}
	\left\lbrace
	{\addtolength{\jot}{0em}
		\begin{aligned}
		&v_M = \Phi_1^L(\varepsilon_M) \, ,\\
		&v_R = \Phi_2^M(\varepsilon_R) \, .
		\end{aligned}}\right.
	\label{IntermediateState2}
	\end{equation}
	The existence of the solution to (\ref{IntermediateState}) will be discussed in the next sections. If the solution exists, one can find the intermediate state $\bm{U}_M$ numerically. To do so, $\varepsilon_M$ is computed by solving (\ref{IntermediateState}) with the Newton-Raphson method, and by computing $v_M=\Phi_1^L(\varepsilon_M)$. The form of the solution $\bm{U}(x,t)$ is then deduced from the corresponding elementary solutions (\ref{SolRiemannChoc}), (\ref{SolRiemannRarefaction}), (\ref{SolRiemannChocR}) or (\ref{SolRiemannRChoc}).
	
	\subsection{Concave constitutive laws}\label{subsec:Concave}
	
	Let us assume that $\sigma''$ is strictly negative over $\left]\varepsilon_\text{\it inf}, \varepsilon_\text{\it sup}\right[$. Therefore, the characteristic fields are genuinely nonlinear and $\sigma$ is strictly concave. In this case, compound waves are not admissible. Also, discontinuities and rarefactions have to satisfy the admissibility conditions (\ref{LaxElasto}) and (\ref{pseudoLax}) respectively. Thus, forward and backward wave curves become
	\begin{equation}
	{\addtolength{\jot}{0.1em}
		\begin{aligned}
		&\Phi_1^L(\varepsilon) =
		\left\lbrace
		{
			\begin{aligned}
			&\mathcal{S}_1^L(\varepsilon) & &\text{if } \varepsilon < \varepsilon_L \, ,\\
			&\mathcal{R}_1^L(\varepsilon) & &\text{if } \varepsilon \geqslant \varepsilon_L \, ,
			\end{aligned}}
		\right.
		& &\Phi_2^L(\varepsilon) =
		\left\lbrace
		{
			\begin{aligned}
			&\mathcal{S}_2^L(\varepsilon) & &\text{if } \varepsilon > \varepsilon_L \, ,\\
			&\mathcal{R}_2^L(\varepsilon) & &\text{if } \varepsilon \leqslant \varepsilon_L \, ,
			\end{aligned}}
		\right.\\
		&\Psi_1^R(\varepsilon) =
		\left\lbrace
		{
			\begin{aligned}
			&\mathcal{S}_1^R(\varepsilon) & &\text{if } \varepsilon > \varepsilon_R \, ,\\
			&\mathcal{R}_1^R(\varepsilon) & &\text{if } \varepsilon \leqslant \varepsilon_R \, ,
			\end{aligned}}
		\right.
		& &\Psi_2^R(\varepsilon) =
		\left\lbrace
		{
			\begin{aligned}
			&\mathcal{S}_2^R(\varepsilon) & &\text{if } \varepsilon < \varepsilon_R \, ,\\
			&\mathcal{R}_2^R(\varepsilon) & &\text{if } \varepsilon \geqslant \varepsilon_R \, .
			\end{aligned}}
		\right.
		\end{aligned}}
	\label{WNLGen}
	\end{equation}
	
	Since the characteristic fields are genuinely nonlinear, $\Phi_1^L$ and $\Psi_2^R$ are of class $C^2$ (section I.6 in~\cite{godlewski96}). From the properties of each elementary curve studied before, we deduce that $\Phi_1^L$ is an increasing bijection over $\left]\varepsilon_\text{\it inf}, \varepsilon_\text{\it sup}\right[$ and that $\Psi_2^R$ is a decreasing bijection. Lastly, theorem~6.1 in~\cite{godlewski96} states that for $\|\bm{U}_R - \bm{U}_L\|$ sufficiently small, the solution of (\ref{SystHypVect})-(\ref{SystCI}) is unique. Similarly to theorem~7.1 in~\cite{godlewski96}, we get a condition on the initial data which ensures the existence of the solution.
	
	\begin{Theorem} \label{thm:Intersect}
		If the characteristic fields are genuinely nonlinear (\ref{NLGen}), then the solution to the Riemann problem (\ref{SystHypVect})-(\ref{SystCI}) exists and is unique, provided that
		\begin{equation}
			\underset{\varepsilon\rightarrow \varepsilon_\textit{inf}+}{\lim} \Psi_2^R(\varepsilon) - \Phi_1^L(\varepsilon) > 0
			\qquad\text{and}\qquad
			\underset{\varepsilon\rightarrow \varepsilon_\textit{sup}-}{\lim} \Psi_2^R(\varepsilon) - \Phi_1^L(\varepsilon) < 0\, ,
			\label{IntersectCond}
		\end{equation}
		with $\Phi_1^L$ and $\Psi_2^R$ given in (\ref{WNLGen}).
	\end{Theorem}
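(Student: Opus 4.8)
The plan is to reduce the existence and uniqueness of the intermediate state $\bm{U}_M$ in (\ref{IntermediateState}) to a scalar root-finding problem and then to exploit the monotonicity of the wave curves recorded just above the statement. Define on $\left]\varepsilon_\textit{inf},\varepsilon_\textit{sup}\right[$ the scalar function
\[
  g(\varepsilon) := \Psi_2^R(\varepsilon) - \Phi_1^L(\varepsilon)\, .
\]
By (\ref{IntermediateState}), an admissible intermediate state $\bm{U}_M=(\varepsilon_M,v_M)^\top$ exists precisely when $g(\varepsilon_M)=0$, in which case $v_M=\Phi_1^L(\varepsilon_M)=\Psi_2^R(\varepsilon_M)$; the equivalence between (\ref{IntermediateState}) and (\ref{IntermediateState2}) noted above then guarantees that such a $\bm{U}_M$ is joined to $\bm{U}_L$ by an admissible 1-wave and to $\bm{U}_R$ by an admissible 2-wave.

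First I would record the regularity and monotonicity of $g$. Genuine nonlinearity (\ref{NLGen}) implies, by section~I.6 in \cite{godlewski96}, that $\Phi_1^L$ and $\Psi_2^R$ are of class $C^2$; concatenating the elementary shock curves (\ref{EqRH}) and rarefaction curves (\ref{EqRarefaction}) according to (\ref{WNLGen}) and using their monotonicity properties (appendix~\ref{subsec:WaveCurves}), $\Phi_1^L$ is a strictly increasing bijection of $\left]\varepsilon_\textit{inf},\varepsilon_\textit{sup}\right[$ onto an interval, while $\Psi_2^R$ is a strictly decreasing one. Hence $g$ is continuous and strictly decreasing, so it has at most one zero. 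Hypothesis (\ref{IntersectCond}) asserts exactly that $\lim_{\varepsilon\to\varepsilon_\textit{inf}+} g(\varepsilon)>0$ and $\lim_{\varepsilon\to\varepsilon_\textit{sup}-} g(\varepsilon)<0$, so the intermediate value theorem yields a unique $\varepsilon_M\in\left]\varepsilon_\textit{inf},\varepsilon_\textit{sup}\right[$ with $g(\varepsilon_M)=0$, and $v_M:=\Phi_1^L(\varepsilon_M)$ completes the unique intermediate state.

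Next I would assemble the global solution and settle uniqueness. Given $\bm{U}_M$, the 1-wave from $\bm{U}_L$ to $\bm{U}_M$ and the 2-wave from $\bm{U}_M$ to $\bm{U}_R$ are each the unique admissible elementary solution, of the form (\ref{SolRiemannChoc}) or (\ref{SolRiemannRarefaction}), dictated by the branch of $\Phi_1^L$ on which $\varepsilon_M$ lies and by the branch of $\Phi_2^M$ on which $\varepsilon_R$ lies. Since $\lambda_1(\bm{U})=-c(\varepsilon)<0<c(\varepsilon)=\lambda_2(\bm{U})$ throughout $\Omega$, all 1-wave speeds are negative and all 2-wave speeds positive, so the two waves occupy disjoint wedges of the $(x,t)$-plane; glued along the constant state $\bm{U}_M$ they form a self-similar weak solution satisfying the entropy conditions wave by wave. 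For uniqueness, theorem~6.1 in \cite{godlewski96} handles the case $\|\bm{U}_R-\bm{U}_L\|$ small, while in general any entropy solution must have the three-constant-state structure of figure~\ref{fig:Notations}-(a) with an intermediate state solving (\ref{IntermediateState}); strict monotonicity of $g$ forces $\varepsilon_M$ to be unique and each elementary wave is determined by its endpoints, so the solution is unique.

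The main obstacle I expect is not the topology of the argument but the verification of the monotonicity, bijectivity and $C^2$-matching of $\Phi_1^L$ and $\Psi_2^R$ at $\varepsilon_L$ and $\varepsilon_R$, together with identifying the endpoint behaviour of the wave curves near $\varepsilon_\textit{inf}$ and $\varepsilon_\textit{sup}$ that makes the limits in (\ref{IntersectCond}) the decisive ones (for instance a vertical asymptote or a zero-slope point of $\sigma$ changes whether those limits are finite). All of these are precisely the facts gathered in appendix~\ref{subsec:WaveCurves} and in section~I.6 of \cite{godlewski96}, so the proof reduces to invoking them and then running the intermediate value theorem; the wave-separation remark makes the final gluing immediate.
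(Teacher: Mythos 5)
Your proposal is correct and follows essentially the same route as the paper: both arguments rest on $\Phi_1^L$ being a continuous strictly increasing bijection and $\Psi_2^R$ a continuous strictly decreasing one, so that a unique intersection $\varepsilon_M$ exists exactly when the endpoint limits in (\ref{IntersectCond}) have the stated signs (your intermediate value theorem applied to $g=\Psi_2^R-\Phi_1^L$ is just a rephrasing of the paper's comparison of the two ranges). The extra remarks on gluing the 1- and 2-waves and on uniqueness mirror what the paper handles by citing theorem~6.1 of \cite{godlewski96} and the wave structure of figure~\ref{fig:Notations}-(a).
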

	
	\begin{proof}
		To ensure that the solution described above exists, the forward and backward wave curves $\Phi_1^L$ and $\Psi_2^R$ must intersect at a strain $\varepsilon_M$ satisfying (\ref{IntermediateState}). The associated functions are continuous bijections over the interval $\left]\varepsilon_\textit{inf}, \varepsilon_\textit{sup}\right[$. Moreover, $\Phi_1^L$ is strictly increasing while $\Psi_2^R$ is strictly decreasing. Therefore, they intersect once over $\left]\varepsilon_\textit{inf}, \varepsilon_\text{\it sup}\right[$ if and only if their ranges intersect. The latter are respectively
		\begin{equation*}
			\left]\underset{\varepsilon\rightarrow \varepsilon_\textit{inf}+}{\lim} \Phi_1^L(\varepsilon), \underset{\varepsilon\rightarrow \varepsilon_\textit{sup}-}{\lim} \Phi_1^L(\varepsilon)\right[
			\quad\text{and}\quad
			\left] \underset{\varepsilon\rightarrow \varepsilon_\textit{sup}-}{\lim} \Psi_2^R(\varepsilon), \underset{\varepsilon\rightarrow \varepsilon_\textit{inf}+}{\lim} \Psi_2^R(\varepsilon) \right[ .
		\end{equation*}
		A comparison between these bounds ends the proof (\ref{IntersectCond}).
		\qed
	\end{proof}
	
	Theorem~\ref{thm:Intersect} can be written in terms of $v_R$. Indeed, (\ref{IntersectCond}) is equivalent to
	\begin{equation}
		\Phi_2^\textit{inf}(\varepsilon_R)
		< v_R <
		\Phi_2^\textit{sup}(\varepsilon_R)\, ,
		\label{IntersectCondJump1}
	\end{equation}
	where
	\begin{equation}
	{\addtolength{\jot}{0.2em}
		\begin{aligned}
		&\Phi_2^\textit{inf}(\varepsilon_R)=\underset{\varepsilon\rightarrow \varepsilon_\textit{inf}+}{\lim} \Phi_1^L(\varepsilon) + v_R - \Psi_2^R(\varepsilon)\, ,\\
		&\Phi_2^\textit{sup}(\varepsilon_R)=\underset{\varepsilon\rightarrow \varepsilon_\textit{sup}-}{\lim} \Phi_1^L(\varepsilon) + v_R - \Psi_2^R(\varepsilon)\, .
		\end{aligned}}
	\label{IntersectCondJump0}
	\end{equation}
	The functions $\Phi_2^\textit{inf}$ and $\Phi_2^\textit{sup}$ in (\ref{IntersectCondJump1}) are the forward wave curves passing through the states $\bm{U}_{\textit{inf}}$ and $\bm{U}_{\textit{sup}}$ respectively, such that
	\begin{equation}
		\bm{U}_{\textit{inf}} = \underset{\varepsilon\rightarrow \varepsilon_\textit{inf}+}{\lim} \left( \varepsilon, \Phi_1^L(\varepsilon) \right)^\top
		\qquad\text{and}\qquad
		\bm{U}_{\textit{sup}} = \underset{\varepsilon\rightarrow \varepsilon_\textit{sup}-}{\lim} \left( \varepsilon, \Phi_1^L(\varepsilon) \right)^\top .
		\label{IntersectCondJump2}
	\end{equation}
	Graphically, $\Phi_2^\textit{inf}$ and $\Phi_2^\textit{sup}$ correspond to the dashed curve $\Phi_2^M$ on figure~\ref{fig:AdmissibleConcave}-(a) when $\varepsilon_M$ tends towards $\varepsilon_\textit{inf}$ or $\varepsilon_\textit{sup}$ respectively. Since the curve $\Phi_1^L$ is only translated vertically when $v_L$ varies, the condition (\ref{IntersectCondJump1})-(\ref{IntersectCondJump2}) can be written in terms of the velocity jump $v_R - v_L$ by substracting $v_L$ in (\ref{IntersectCondJump1}). For analytical expressions and remarks, see (\ref{IntersectCondNLGen}) in appendix~\ref{subsec:MathsJump}.
	
	\begin{figure}
		\centering
		\begin{minipage}{0.49\linewidth}
			\vspace{-1.9em}
			\centering
			(a)
			
			\vspace{1.2em}
			\includegraphics{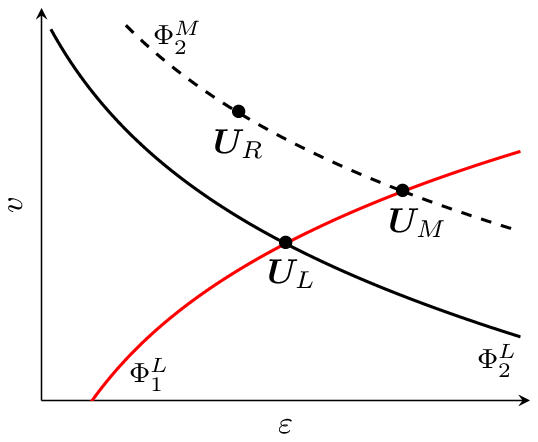}
		\end{minipage}
		\begin{minipage}{0.49\linewidth}
			\centering
			(b)
			
			\vspace{0.8em}
			\includegraphics{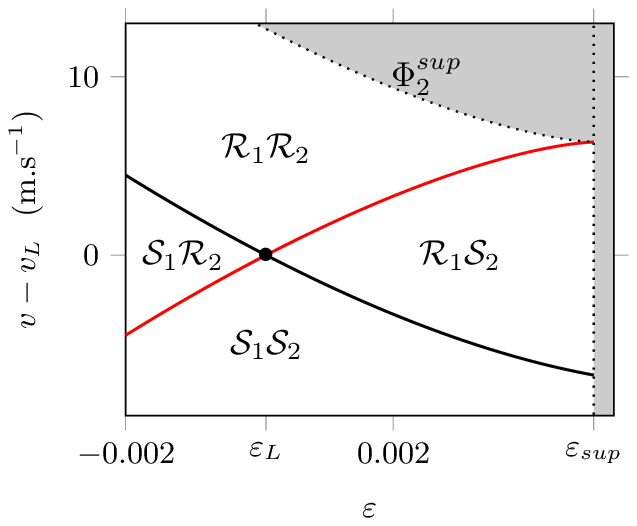}
		\end{minipage}
		\caption{(a) Construction of the solution to (\ref{IntermediateState2}). (b) Admissibility regions and hyperbolicity domain (white) for Landau's law (\ref{BLawLandau}) with $\varepsilon_L=10^{-4}$.\label{fig:AdmissibleConcave}}
	\end{figure}
	
	In (\ref{IntersectCondJump1}), $\Phi_2^\textit{inf}(\varepsilon_R)$ is infinite if $\varepsilon_\textit{inf} = {-\infty}$ or if $\sigma(\varepsilon)$ tends towards ${-\infty}$ when $\varepsilon$ tends towards $\varepsilon_\textit{inf}+$. The value of $\Phi_2^\textit{sup}(\varepsilon_R)$ is infinite if $C(\varepsilon)$ tends towards ${+\infty}$ when $\varepsilon$ tends towards $\varepsilon_\textit{sup}-$. If both are infinite, then theorem~\ref{thm:Intersect} is satisfied for every initial data. Else, there exists a bound on $v_R - v_L$ which ensures the existence of the solution. This result is new and is not known in the literature.
	
	Now, we describe the admissibility regions, i.e. the regions of the $\varepsilon$-$v$ plane where a given wave structure is admissible given $\bm{U}_L$. This is similar to the approach presented in theorem~7.1 of \cite{godlewski96}. Thus, we draw the forward wave curves $\Phi_1^L$ and $\Phi_2^L$ passing through $\bm{U}_L$. These curves divide the plane into four regions (figure~\ref{fig:AdmissibleConcave}-(a)). When $\bm{U}_M$ belongs to $\Phi_1^L$, (\ref{WNLGen}) states which kind of 1-wave connects $\bm{U}_M$ to $\bm{U}_L$. Then, we draw the forward wave curve $\Phi_2^M$ passing through $\bm{U}_M$. For any $\bm{U}_R$ belonging to $\Phi_2^M$, we know which kind of 2-wave connects it to $\bm{U}_M$ (\ref{WNLGen}). Finally, we obtain a map of the admissible combinations of 1-waves and 2-waves (figure~\ref{fig:AdmissibleConcave}-(b)). If (\ref{IntersectCondJump1}) is satisfied, then four regions are distinguished:
	\begin{itemize}
		\item If $v_R\geqslant\Phi_1^L(\varepsilon_R)$ and $v_R\geqslant\Phi_2^L(\varepsilon_R)$, region $\mathcal{R}_1\mathcal{R}_2$,
		
		\item Else, if $v_R\geqslant\Phi_1^L(\varepsilon_R)$ and $v_R<\Phi_2^L(\varepsilon_R)$, region $\mathcal{S}_1\mathcal{R}_2$,
		
		\item Else, if $v_R\geqslant\Phi_2^L(\varepsilon_R)$ and $v_R<\Phi_1^L(\varepsilon_R)$, region $\mathcal{R}_1\mathcal{S}_2$,
		
		\item Else, region $\mathcal{S}_1\mathcal{S}_2$.
	\end{itemize}
	
	\paragraph*{Model 1 {\mdseries (hyperbola)}.}
	Here, $\left]\varepsilon_\textit{inf}, \varepsilon_\textit{sup}\right[=\left]{-d},{+\infty}\right[$. The limit of $\sigma(\varepsilon)$ when $\varepsilon$ tends towards ${-d}$ is equal to ${-\infty}$. Also, the limit of $C(\varepsilon)$ when $\varepsilon$ tends towards ${+\infty}$ is equal to ${+\infty}$. Therefore, theorem~\ref{thm:Intersect} is satisfied for every left and right states in $\left]\varepsilon_\textit{inf}, \varepsilon_\text{\it sup}\right[$. The computation of the solution is detailed in section~\ref{sec:NumEx}, for a configuration with two shocks and another configuration with two rarefactions.
	
	\paragraph*{Model 3 {\mdseries (Landau)}.}
	Here, $\delta=0$ in (\ref{BLawLandau}) and $\left]\varepsilon_\textit{inf}, \varepsilon_\text{\it sup}\right[=\left]{-\infty},1/2\beta\right[$. At the lower edge, $\varepsilon_\textit{inf} = {-\infty}$. But at the upper edge, $C(\varepsilon)$ vanishes when $\varepsilon$ tends towards $1/2\beta$. Therefore, theorem~\ref{thm:Intersect} is not satisfied for high values of the velocity jump. To illustrate, we take $\varepsilon_L={-\varepsilon_R}={10}^{-4}$ and the parameters issued from table~\ref{tab:Params}. Condition (\ref{IntersectCondNLGen}) then becomes $v_R-v_L \leqslant\; 13.07$~m.s$^{-1}$. A graphical interpretation is given on figure~\ref{fig:AdmissibleConcave}-(b).
	
	\subsection{Convex-concave constitutive laws}\label{subsec:ConvexConcave}
	
	Let us assume that $\sigma''$ is strictly decreasing and equals zero at $\varepsilon=\varepsilon_0$. Therefore, the characteristic fields are neither linearly degenerate nor genuinely nonlinear. The stress function $\varepsilon\mapsto \sigma(\varepsilon)$ is strictly convex for $\varepsilon<\varepsilon_0$ and strictly concave for $\varepsilon>\varepsilon_0$.
	For any $a$ and $b$, let us denote
	\begin{equation*}
		\begin{aligned}
			&a \underset{L}{\leqslant} b \qquad\Leftrightarrow\qquad (\varepsilon_L - \varepsilon_0)\, a \leqslant (\varepsilon_L - \varepsilon_0)\, b \, ,\\
			&a \underset{R}{<} b \qquad\Leftrightarrow\qquad (\varepsilon_R - \varepsilon_0)\, a < (\varepsilon_R - \varepsilon_0)\, b \, .
		\end{aligned}
	\end{equation*}
	Similar notations are used for other kinds of inequalities, such as $a \underset{L}{>} b$, etc. From the graphical method in section~\ref{subsec:GraphMeth} based on convex hull constructions, forward and backward wave curves write
	\begin{equation}
	\begin{aligned}
	&\Phi_1^L(\varepsilon) =
	\left\lbrace
	{ 
		\begin{aligned}
			&\mathcal{S}_1^L(\varepsilon) & &\text{if }
			\varepsilon_L^* \,\underset{L}{\leqslant}\, \varepsilon \,\underset{L}{<}\, \varepsilon_L \, ,\\
			&\mathcal{R}_1^L(\varepsilon) & &\text{if }
			\varepsilon \,\underset{L}{\geqslant}\, \varepsilon_L \, ,\\
			&\mathcal{SR}_1^L(\varepsilon) & &\text{if }
			\varepsilon \,\underset{L}{<}\, \varepsilon_L^* \, ,
		\end{aligned}}
		\right.
	\Phi_2^L(\varepsilon) =
	\left\lbrace
	{
		\begin{aligned}
		&\mathcal{S}_2^L(\varepsilon) & &\text{if }
		\varepsilon \,\underset{L}{<}\, {\varepsilon_L}^\dagger
		\quad\text{or}\quad
		\varepsilon \,\underset{L}{>}\, \varepsilon_L
		\, ,\\
		&\mathcal{R}_2^L(\varepsilon) & &\text{if }
		\varepsilon_0 \,\underset{L}{\leqslant}\, \varepsilon \,\underset{L}{\leqslant}\, \varepsilon_L \, ,\\
		&\mathcal{RS}_2^L(\varepsilon) & &\text{if }
		{\varepsilon_L}^\dagger \,\underset{L}{\leqslant}\, \varepsilon \,\underset{L}{<}\, \varepsilon_0 \, ,
		\end{aligned}}
	\right.\\
	&\Psi_1^R(\varepsilon) =
		\left\lbrace
		{
		\begin{aligned}
			&\mathcal{S}_1^R(\varepsilon) & &\text{if }
			\varepsilon \,\underset{R}{<}\, {\varepsilon_R}^\dagger
			\quad\text{or}\quad
			\varepsilon \,\underset{R}{>}\, \varepsilon_R
			\, ,\\
			&\mathcal{R}_1^R(\varepsilon) & &\text{if }
			\varepsilon_0 \,\underset{R}{\leqslant}\, \varepsilon \,\underset{R}{\leqslant}\, \varepsilon_R \, ,\\
			&\mathcal{RS}_1^R(\varepsilon) & &\text{if }
			{\varepsilon_R}^\dagger \,\underset{R}{\leqslant}\, \varepsilon \,\underset{R}{<}\, \varepsilon_0 \, ,
		\end{aligned}}
	\right.
	\Psi_2^R(\varepsilon) =
		\left\lbrace
		{
		\begin{aligned}
			&\mathcal{S}_2^R(\varepsilon) & &\text{if }
			\varepsilon_R^* \,\underset{R}{\leqslant}\, \varepsilon \,\underset{R}{<}\, \varepsilon_R \, ,\\
			&\mathcal{R}_2^R(\varepsilon) & &\text{if }
			\varepsilon \,\underset{R}{\geqslant}\, \varepsilon_R \, ,\\
			&\mathcal{SR}_2^R(\varepsilon) & &\text{if }
			\varepsilon \,\underset{R}{<}\, \varepsilon_R^* \, .
		\end{aligned}}
		\right.
	\end{aligned}
	\label{WNLNGen}
	\end{equation}
	When $\varepsilon_0\rightarrow {-\infty}$, the constitutive law $\sigma(\varepsilon)$ becomes strictly concave. In this case, $\varepsilon$, $\varepsilon_L$ and $\varepsilon_R$ are always higher than $\varepsilon_0$. Thus, \raisebox{0.3em}{$\underset{L}{<}$} can be replaced by $<$ in (\ref{WNLNGen}) (idem for similar notations). Moreover, $\varepsilon_L^*$, $\varepsilon_R^*$, ${\varepsilon_L}^\dagger$ and ${\varepsilon_R}^\dagger$ tend towards ${-\infty}$. Therefore, we recover the wave curves (\ref{WNLGen}).
	
	Forward and backward wave curves are Lipschitz continuous and they are $C^2$ in the vicinity of the states $\bm{U}_L$ or $\bm{U}_R$. Their regularity may be reduced to $C^1$ after the first crossing with the line $\varepsilon=\varepsilon_0$ (sections 9.3 to 9.5 of \cite{dafermos05}). From the properties of each elementary curve studied before, we deduce that $\Phi_1^L$ is an increasing bijection over $\left]\varepsilon_\textit{inf}, \varepsilon_\text{\it sup}\right[$ and $\Psi_2^R$ a decreasing bijection. Lastly, theorem~9.5.1 in \cite{dafermos05} states that for $\|\bm{U}_R - \bm{U}_L\|$ sufficiently small, the solution is unique. Similarly to theorem~\ref{thm:Intersect}, we deduce a condition which ensures the existence of the solution for any initial data.
	
	\begin{Theorem} \label{thm:IntersectNLNGen}
		If the constitutive law is strictly convex for $\varepsilon<\varepsilon_0$ and strictly concave for $\varepsilon>\varepsilon_0$, then the solution to the Riemann problem (\ref{SystHypVect})-(\ref{SystCI}) exists and is unique, provided that
		\begin{equation}
			\underset{\varepsilon\rightarrow \varepsilon_\textit{inf}+}{\lim} \Psi_2^R(\varepsilon) - \Phi_1^L(\varepsilon) > 0
			\qquad\text{and}\qquad
			\underset{\varepsilon\rightarrow \varepsilon_\textit{sup}-}{\lim} \Psi_2^R(\varepsilon) - \Phi_1^L(\varepsilon) < 0 \, ,
			\label{IntersectCondNLNGenThm}
		\end{equation}
		with $\Phi_1^L$ and $\Psi_2^R$ given in (\ref{WNLNGen}).
	\end{Theorem}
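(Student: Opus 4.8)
The plan is to imitate the proof of Theorem~\ref{thm:Intersect}, since the argument there used only continuity and strict monotonicity of the two wave curves, and those properties survive in the convex--concave setting even though the curves are now merely Lipschitz. First I would recall the construction of Section~\ref{subsec:GenStrat}: a solution of the Riemann problem of the admissible form exists exactly when the forward wave curve $\Phi_1^L$ and the backward wave curve $\Psi_2^R$, both given piecewise by (\ref{WNLNGen}), meet at a strain $\varepsilon_M\in\left]\varepsilon_\textit{inf},\varepsilon_\textit{sup}\right[$; the intermediate state is then $\bm{U}_M=(\varepsilon_M,\Phi_1^L(\varepsilon_M))^\top$, and the two elementary waves connecting $\bm{U}_L$ to $\bm{U}_M$ and $\bm{U}_M$ to $\bm{U}_R$ are read off from (\ref{WNLNGen}).

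Next I would invoke the structural facts stated just before the theorem: $\Phi_1^L$ is a continuous strictly increasing bijection of $\left]\varepsilon_\textit{inf},\varepsilon_\textit{sup}\right[$ onto its range, and $\Psi_2^R$ a continuous strictly decreasing bijection. Consequently the map $g(\varepsilon)=\Psi_2^R(\varepsilon)-\Phi_1^L(\varepsilon)$ is continuous and strictly decreasing, hence injective, so it vanishes at most once; being monotone, its one-sided limits at $\varepsilon_\textit{inf}+$ and $\varepsilon_\textit{sup}-$ exist in $[{-\infty},{+\infty}]$, and $g$ has a (necessarily unique) zero if and only if it changes sign between those two endpoints, which is precisely condition (\ref{IntersectCondNLNGenThm}). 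Equivalently, as in the remark following Theorem~\ref{thm:Intersect}, this amounts to the ranges of $\Phi_1^L$ and $\Psi_2^R$ intersecting. Uniqueness of the constructed solution follows from the uniqueness of $\varepsilon_M$; uniqueness among all admissible weak solutions for $\|\bm{U}_R-\bm{U}_L\|$ small is the content of theorem~9.5.1 in \cite{dafermos05}, already cited above.

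The step I expect to require the most care is not the intermediate-value argument itself but the verification that $\Phi_1^L$ is indeed an increasing bijection and $\Psi_2^R$ a decreasing one in the presence of compound waves: along the arcs $\mathcal{SR}_p^\ell$ and $\mathcal{RS}_p^\ell$ one must confirm monotonicity in $\varepsilon$ and continuous matching of the pieces at $\varepsilon_L$, $\varepsilon_L^*$, ${\varepsilon_L}^\dagger$, $\varepsilon_0$ (and the corresponding $R$-points), where the regularity can drop to $C^1$ after the first crossing of $\varepsilon=\varepsilon_0$. This is supplied by the elementary-curve analysis of Section~\ref{sec:ElemSol} together with Appendix~\ref{subsec:WaveCurves}; granting those monotonicity and continuity properties, the argument above reproduces (\ref{IntersectCondNLNGenThm}) verbatim.
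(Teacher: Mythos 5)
Your proposal is correct and follows essentially the same route as the paper: the paper's proof simply reduces the existence criterion to a comparison of the ranges of $\Phi_1^L$ and $\Psi_2^R$, exactly as in Theorem~\ref{thm:Intersect}, relying on the monotone-bijection and continuity properties of the wave curves (\ref{WNLNGen}) established beforehand and on theorem~9.5.1 of \cite{dafermos05} for uniqueness. Your extra attention to the matching and monotonicity of the compound-wave arcs is precisely what the paper delegates to the elementary-curve analysis and appendix~\ref{subsec:WaveCurves}, so no new idea is needed.
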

	
	\begin{proof}
		Similarly to theorem~\ref{thm:Intersect}, we can reduce the existence criterion to a comparison between the ranges of $\Phi_1^L$ and $\Psi_2^R$.
		\qed
	\end{proof}
	
	Theorem~\ref{thm:IntersectNLNGen} can be written in terms of the velocity jump $v_R-v_L$. The analytical expressions (\ref{IntersectCondNLNGen1})-(\ref{IntersectCondNLNGen4}) are given in appendix~\ref{subsec:MathsJump}. If both limits of $C(\varepsilon)$ are infinite when $\varepsilon$ tends towards $\varepsilon_\textit{inf}+$ or $\varepsilon_\textit{sup}-$, then (\ref{IntersectCondNLNGenThm}) is satisfied for every initial data. Else, there exists a bound on the velocity jump, which ensures the existence of the solution.
	
	\paragraph*{Case $\varepsilon_L = \varepsilon_0$.}
	We describe the admissibility regions when the left state is on the inflexion point. As we did for concave constitutive laws, we draw the forward wave curve $\Phi_1^L$ passing through $\bm{U}_L$ (figure~\ref{fig:AdmissibleLandau3}-(a)). Let us consider an intermediate state $\bm{U}_M$ belonging to $\Phi_1^L$. It is connected to $\bm{U}_L$ through a 1-rarefaction (\ref{WNLNGen}). Then, we draw the forward wave curve $\Phi_2^M$ passing through $\bm{U}_M$. For any $\bm{U}_R$ belonging to $\Phi_2^M$, one knows which kind of 2-wave connects $\bm{U}_M$ to $\bm{U}_R$ (\ref{WNLNGen}). On figure~\ref{fig:AdmissibleLandau3}-(a), $\varepsilon_M > \varepsilon_0$. Therefore, we have a 2-shock if $\varepsilon_R< {\varepsilon_M}^\dagger$ or $\varepsilon_R> \varepsilon_M$, a 2-rarefaction if $\varepsilon_0 \leqslant \varepsilon_R \leqslant \varepsilon_M$ and a 2-rarefaction-shock else. Here, the 2-wave is a rarefaction-shock.
	
	\begin{figure}
		\begin{minipage}{0.49\linewidth}
			\centering
			(a)
			
			\vspace{0.9em}
			\includegraphics{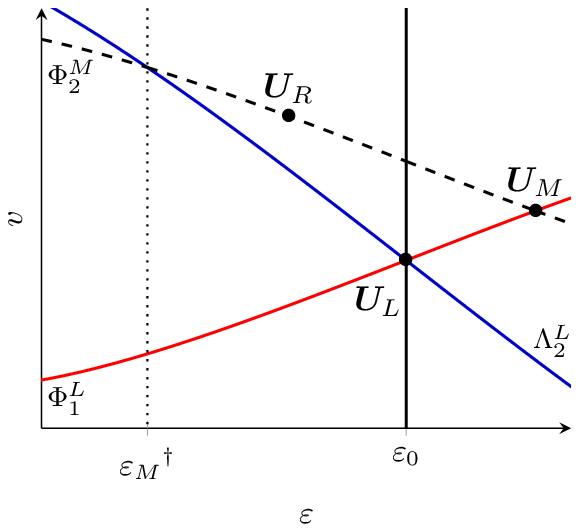}
		\end{minipage}
		\begin{minipage}{0.49\linewidth}
			\centering
			(b)
			
			\vspace{0.4em}
			\includegraphics{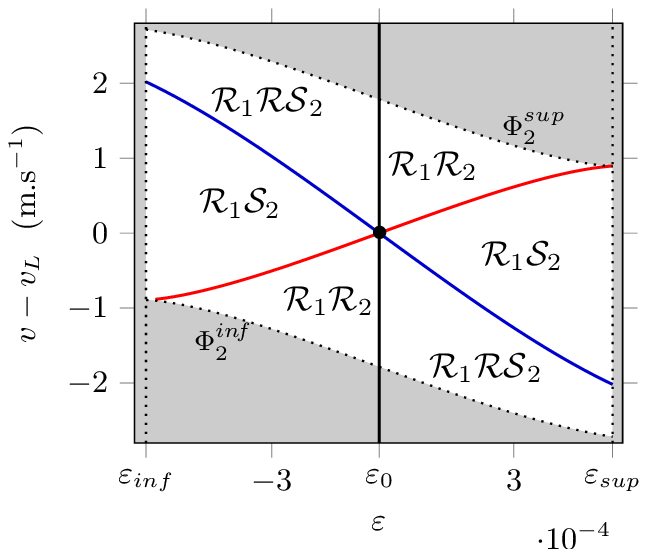}
		\end{minipage}
		
		\caption{Case $\varepsilon_L=\varepsilon_0$. (a) Construction of the solution to (\ref{IntermediateState2}). (b) Admissibility regions and hyperbolicity domain (white) for Landau's law (\ref{BLawLandau}) with a cubic nonlinearity and parameters from table~\ref{tab:Params}.\label{fig:AdmissibleLandau3}}
	\end{figure}
	
	To achieve the partition of the $\varepsilon$-$v$ space into admissibility regions, we introduce the curve $\Lambda_2^L$ which marks the equality case in Liu's entropy condition for 2-shocks (\ref{Liu}). The curve $\Lambda_2^L$ marks the frontier between the admissibility regions of 2-shocks and 2-rarefaction-shocks. It is the set of right states $\bm{u}$ belonging to $\Phi_2^M$ such that $\varepsilon={\varepsilon_M}^\dagger$, or equivalently $\varepsilon^*=\varepsilon_M$, when $\bm{U}_M$ varies along $\Phi_1^L$ (figure~\ref{fig:AdmissibleLandau3}-(a)). Hence, $\bm{u}$ satisfies $v = \mathcal{RS}_2^M(\varepsilon)$, where $\bm{U}_M = (\varepsilon^*,\Phi_1^L(\varepsilon^*))^\top$:
	\begin{equation}
		v = \Phi_1^L(\varepsilon^*) + c(\varepsilon^*) (\varepsilon^* - \varepsilon) \equiv \Lambda_2^L(\varepsilon) \, .
		\label{Lambda2L}
	\end{equation}
	Finally, we obtain a map of the admissible combinations of 1-waves and 2-waves (figure~\ref{fig:AdmissibleLandau3}-(b)). If (\ref{IntersectCondJump1}) is satisfied and $\varepsilon_L=\varepsilon_0$, then three regions are distinguished:
	\begin{itemize}
		\item If $v_R \,\underset{R}{\geqslant}\, \Phi_1^L(\varepsilon_R)$, region $\mathcal{R}_1\mathcal{R}_2$,
		\item Else, if $v_R \,\underset{R}{\geqslant}\, \Lambda_2^L(\varepsilon_R)$, region $\mathcal{R}_1\mathcal{S}_2$,
		\item Else, region $\mathcal{R}_1\mathcal{RS}_2$.
	\end{itemize}
	
	\paragraph*{Case $\varepsilon_L \neq \varepsilon_0$.}
	Figure~\ref{fig:AdmissibleLandau3Gen} represents the admissibility regions for $\varepsilon_L>\varepsilon_0$. Similarly, figure~\ref{fig:AdmissibleLandau3Gen2} shows the admissibility regions for $\varepsilon_L>\varepsilon_0$. In both cases, we draw the forward wave curves $\Phi_1^L$ and $\Phi_2^L$ passing through $\bm{U}_L$. For any intermediate state $\bm{U}_M$ belonging to $\Phi_1^L$, equation (\ref{WNLNGen}) selects the 1-wave which connects $\bm{U}_M$ to $\bm{U}_L$: a 1-shock if $\varepsilon_L^*\leqslant \varepsilon_M < \varepsilon_L$, a 1-rarefaction if $\varepsilon_M\geqslant \varepsilon_L$ and a 1-shock-rarefaction else. Then, we draw the curve $\Lambda_2^L$ marking Liu's condition for 2-shocks. Thus, we can already qualify six admissibility regions.
	
	\begin{figure}
		\centering
		\includegraphics{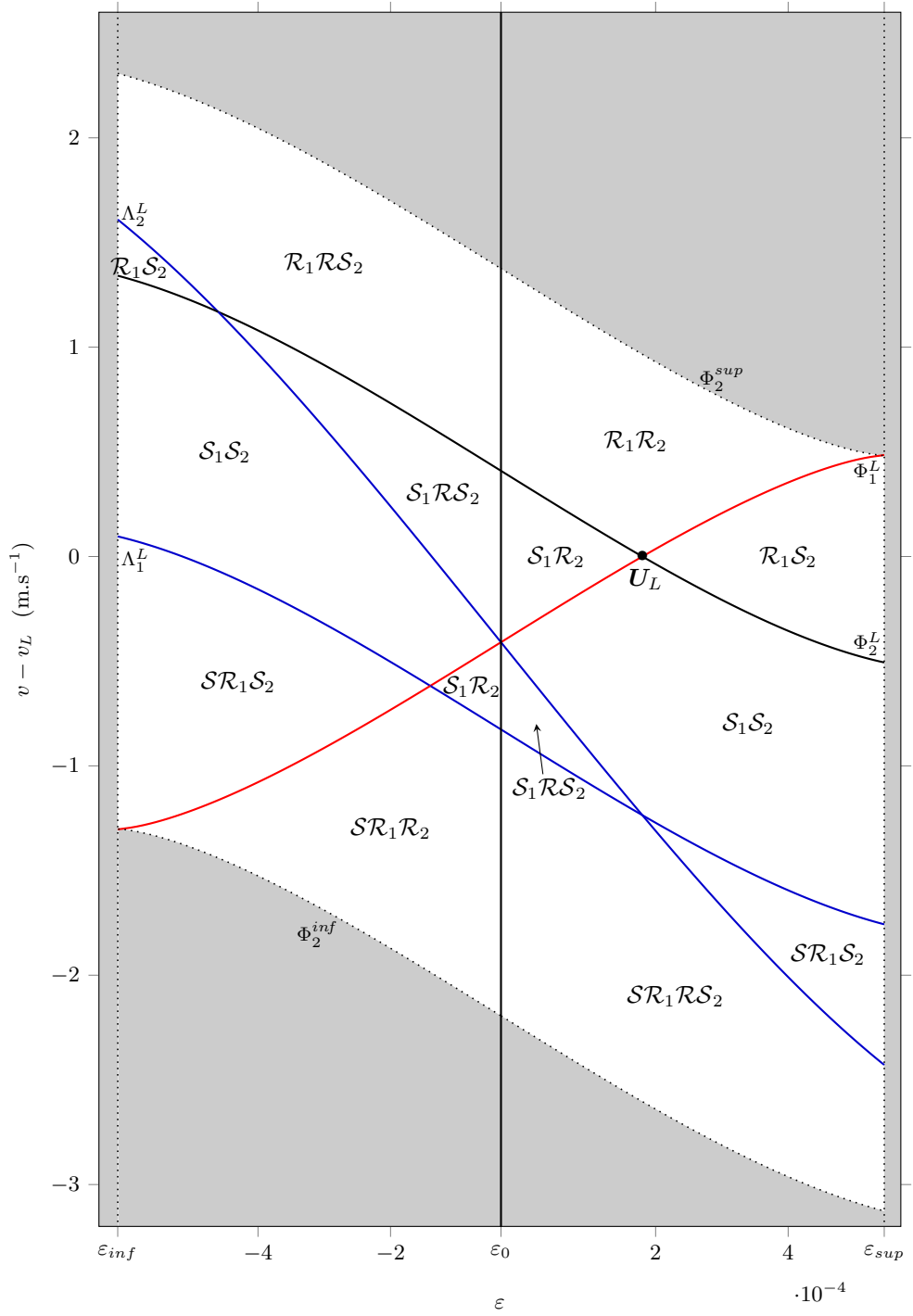}
		
		\caption{Case $\varepsilon_L>\varepsilon_0$. Admissibility regions and hyperbolicity domain (white) for Landau's law (\ref{BLawHyp}) with a cubic nonlinearity and the parameters from table~\ref{tab:Params}. Here, $\varepsilon_L=1.8\times 10^{-4}$.\label{fig:AdmissibleLandau3Gen}}
	\end{figure}
	
	\begin{figure}
		\centering
		\includegraphics{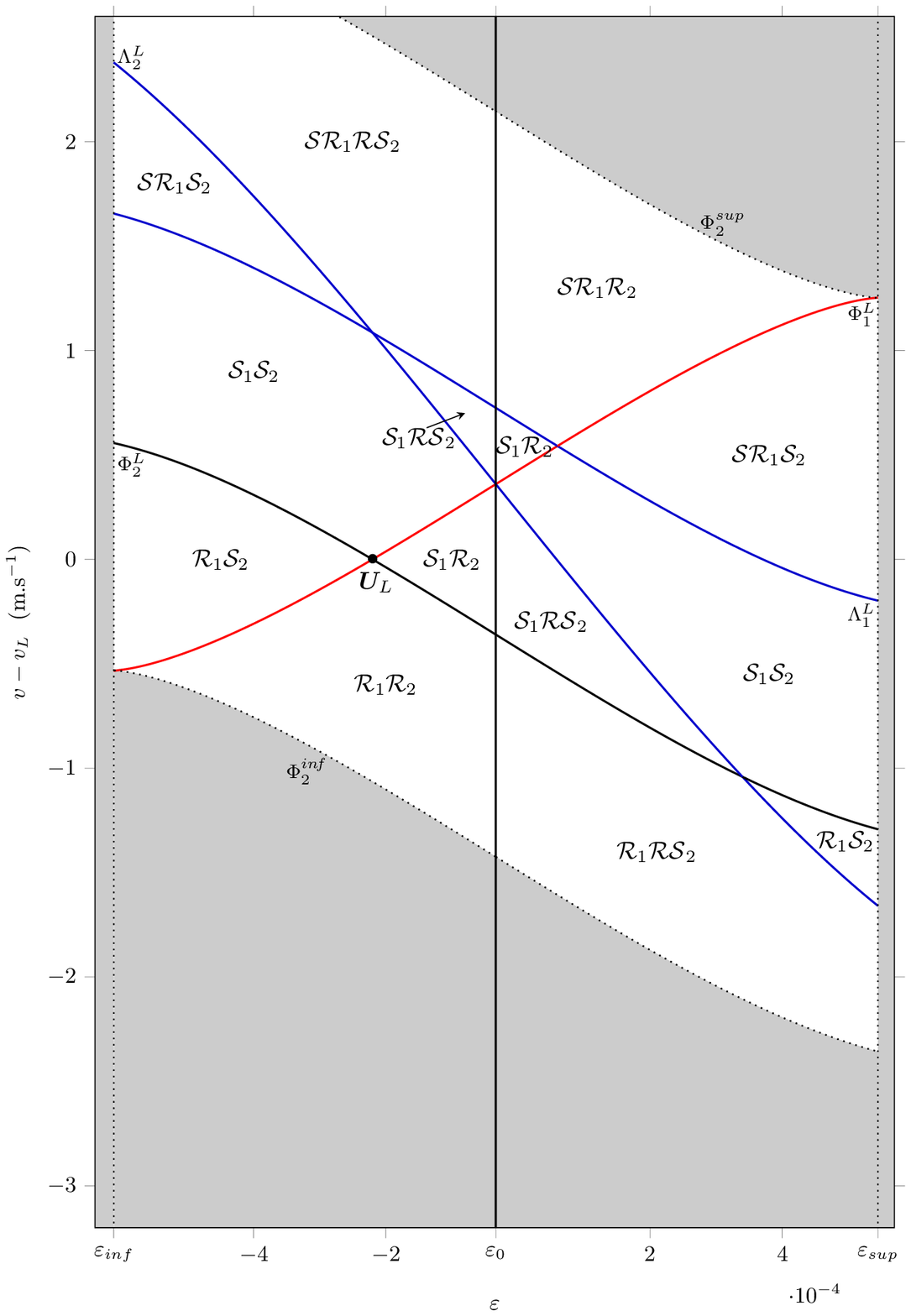}
		
		\caption{Case $\varepsilon_L<\varepsilon_0$. Same as figure~\ref{fig:AdmissibleLandau3Gen}, but with $\varepsilon_L={-2.2}\times 10^{-4}$.\label{fig:AdmissibleLandau3Gen2}}
	\end{figure}
	
	To achieve the partition of the $\varepsilon$-$v$ space, we introduce the curve $\Lambda_1^L$ which corresponds to the equality case in Liu's entropy condition for 1-shocks (\ref{Liu}). The curve $\Lambda_1^L$ marks the frontier between the admissibility regions of 1-shocks and 1-shock-rarefactions. It is the locus of right states $\bm{u}$ belonging to $\Phi_2^M$, where the intermediate state is $\bm{U}_M=(\varepsilon_L^*, \Phi_1^L(\varepsilon_L^*))^\top$. Since $(\varepsilon_L^*-\varepsilon_0)(\varepsilon_L-\varepsilon_0)\leqslant 0$, the inequalities depending on $\varepsilon_M-\varepsilon_0$ in $\Phi_2^M(\varepsilon)$ (\ref{WNLNGen}) can be changed in inequalities depending on $\varepsilon_L-\varepsilon_0$. Hence,
	\begin{equation}
		v
		=
		\left\lbrace
		{\addtolength{\jot}{0.1em}
			\begin{aligned}\!
			&\mathcal{S}_2^M(\varepsilon) & &\text{if }
			\varepsilon \,\underset{L}{>}\, \varepsilon_L 
			\quad\text{or}\quad
			\varepsilon \,\underset{L}{<}\, \varepsilon_L^*\\
			&\mathcal{R}_2^M(\varepsilon) & &\text{if }
			\varepsilon_0 \,\underset{L}{\geqslant}\, \varepsilon \,\underset{L}{\geqslant}\, \varepsilon_L^* \\
			&\mathcal{RS}_2^M(\varepsilon) & &\text{if }
			\varepsilon_L \,\underset{L}{\geqslant}\, \varepsilon \,\underset{L}{>}\, \varepsilon_0
			\end{aligned}}
		\right.
		\equiv \Lambda_1^L(\varepsilon) \, .
		\label{Lambda1L}
	\end{equation}
	Finally, if (\ref{IntersectCondJump1}) is satisfied and $\varepsilon_L\neq\varepsilon_0$, then nine regions are distinguished:
	\begin{itemize}
		\item If $v_R \,\underset{L}{\geqslant}\, \Phi_2^L(\varepsilon_R)$, $v_R \,\underset{L}{\geqslant}\, \Phi_1^L(\varepsilon_R)$ and $\varepsilon_R \,\underset{L}{\geqslant}\, \varepsilon_0$, region $\mathcal{R}_1\mathcal{R}_2$,
		\item Else, if $v_R \,\underset{L}{\geqslant}\, \Phi_2^L(\varepsilon_R)$ and $\left[v_R \,\underset{L}{<}\, \Phi_1^L(\varepsilon_R) \text{ or } v_R \,\underset{L}{\leqslant}\, \Lambda_2^L(\varepsilon_R)\right]$, region $\mathcal{R}_1\mathcal{S}_2$,
		\item Else, if $v_R \,\underset{L}{>}\, \Phi_2^L(\varepsilon_R)$, $v_R \,\underset{L}{>}\, \Lambda_2^L(\varepsilon_R)$ and $\varepsilon_R \,\underset{L}{<}\, \varepsilon_0$, region $\mathcal{R}_1\mathcal{RS}_2$.
		\item Else, if $v_R \,\underset{L}{<}\, \Phi_2^L(\varepsilon_R)$, $v_R \,\underset{L}{\geqslant}\, \Lambda_1^L(\varepsilon_R)$ and $v_R \,\underset{R}{\geqslant}\, \Phi_1^L(\varepsilon_R)$, region $\mathcal{S}_1\mathcal{R}_2$,
		\item Else, if $v_R \,\underset{L}{\leqslant}\, \Phi_2^L(\varepsilon_R)$, $v_R \,\underset{L}{\geqslant}\, \Lambda_1^L(\varepsilon_R)$ and $v_R \,\underset{R}{<}\, \Lambda_2^L(\varepsilon_R)$, region $\mathcal{S}_1\mathcal{RS}_2$.
		\item Else, if $v_R \,\underset{L}{\leqslant}\, \Phi_1^L(\varepsilon_R)$, $v_R \,\underset{L}{\leqslant}\, \Lambda_1^L(\varepsilon_R)$ and $\varepsilon_R \,\underset{L}{\leqslant}\, \varepsilon_0$, region $\mathcal{SR}_1\mathcal{R}_2$,
		\item Else, if $v_R \,\underset{L}{\leqslant}\, \Lambda_1^L(\varepsilon_R)$ and $\left[v_R \,\underset{L}{>}\, \Phi_1^L(\varepsilon_R) \text{ or } v_R \,\underset{L}{\geqslant}\, \Lambda_2^L(\varepsilon_R)\right]$, region $\mathcal{SR}_1\mathcal{S}_2$,
		\item Else, if $v_R \,\underset{L}{<}\, \Lambda_1^L(\varepsilon_R)$, $v_R \,\underset{L}{<}\, \Lambda_2^L(\varepsilon_R)$ and $\varepsilon_R \,\underset{L}{>}\, \varepsilon_0$, region $\mathcal{SR}_1\mathcal{RS}_2$.
		\item Else, region $\mathcal{S}_1\mathcal{S}_2$.
	\end{itemize}
	
	\paragraph*{Model 2 {\mdseries (tanh)}.}
		Here, $\left]\varepsilon_\textit{inf}, \varepsilon_\text{\it sup}\right[=\mathbb{R}$. The limit of $C(\varepsilon)$ when $\varepsilon$ tends towards ${\pm\infty}$ is equal to ${\pm \frac{\pi}{2}} c_0\, d$. Therefore, the velocity jump is always bounded. This property is illustrated on figure~\ref{fig:AdmissibleTanh}. If $\varepsilon_L={-\varepsilon_R}=10^{-4}$, the velocity jump must satisfy $|v_R-v_L|\leqslant 6.16~\text{m.s}^{-1}$.
		
	\begin{figure}
		\centering
		\includegraphics{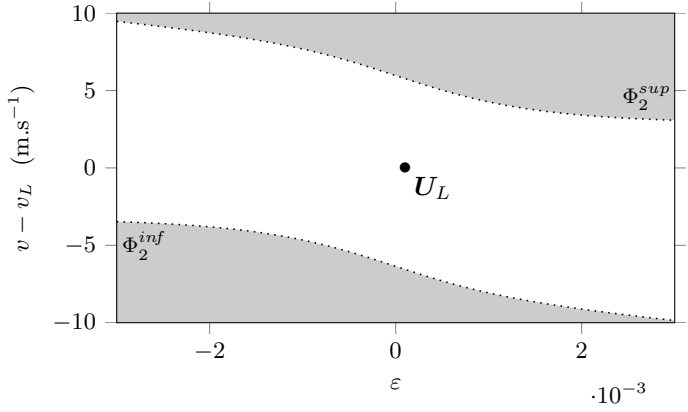}
		
		\caption{Existence domain for the tanh constitutive law with parameters from table~\ref{tab:Params} and $\varepsilon_L=10^{-4}$.\label{fig:AdmissibleTanh}}
	\end{figure}
	
	\paragraph*{Model 3 {\mdseries (Landau)}.}
		Here, $\left]\varepsilon_\textit{inf}, \varepsilon_\text{\it sup}\right[$ is bounded (\ref{OmegaLandau}). The limit of $C(\varepsilon)$ when $\varepsilon$ tends towards $\varepsilon_\textit{sup}$ or $\varepsilon_\textit{inf}$ is equal to ${\pm \frac{\pi}{2}} c_0\, \frac{\beta^2+3\delta}{6\delta\sqrt{3\delta}}$. Therefore, the velocity jump $v_R-v_L$ is also bounded, which is illustrated on figures \ref{fig:AdmissibleLandau3} and \ref{fig:AdmissibleLandau3Gen}. With the parameters from table~\ref{tab:Params}, it must belong to $\left[{-1.91}, 1.66\right]~\text{m.s}^{-1}$ if $\varepsilon_L={-\varepsilon_R}=10^{-4}$. The computation of the solution is detailed in section~\ref{sec:NumEx}, for a configuration with two compound waves.
	
	
	\section{Numerical examples}\label{sec:NumEx}
	
	With the parameters issued from table~\ref{tab:Params}, we give two examples for the hyperbola constitutive law (\ref{BLawHyp}) and one for Landau's law (\ref{BLawLandau}).
	
	\paragraph*{1-shock, 2-shock \mdseries{(hyperbola)}.}
	On figure~\ref{fig:Hyp1}, we display the solution with initial data $\varepsilon_L=-10^{-4}$, $\varepsilon_R=10^{-4}$, $v_L=0.5$~m.s$^{-1}$ and $v_R=-0.5$~m.s$^{-1}$. The solution consists of two shocks:
	\begin{equation}
	\bm{U}(x,t)=
	\left\lbrace\!
	{\renewcommand{\arraystretch}{1.2}
		\begin{array}{ll}
		\bm{U}_L &\text{if } x < s_1\, t\, ,\\
		\bm{U}_M &\text{if } s_1\, t < x < s_2\, t\, ,\\
		\bm{U}_R &\text{if } s_2\, t < x\, .\\
		\end{array}}\!\right.
	\end{equation}
	Here $\varepsilon_M\approx -2.282\times 10^{-4}$. Therefore, the shock speeds are $s_1\approx -2.353$~km/s and $s_2\approx 2.129$~km/s (\ref{EqRH2}).
	
	\paragraph*{1-rarefaction, 2-rarefaction \mdseries{(hyperbola)}.}
	On figure~\ref{fig:Hyp4}, we represent the solution with initial data $\varepsilon_L=-10^{-4}$, $\varepsilon_R=10^{-4}$, $v_L=-0.5$~m.s$^{-1}$ and $v_R=0.5$~m.s$^{-1}$. It consists of two rarefactions:
	\begin{equation}
	\bm{U}(x,t)=
	\left\lbrace\!
	{\renewcommand{\arraystretch}{1.2}
		\begin{array}{ll}
		\bm{U}_L &\text{if } x \leqslant {-c}(\varepsilon_L)\, t\, ,\\
		{\bm V}_1(x/t) &\text{if } {-c}(\varepsilon_L)\, t \leqslant x \leqslant {-c}(\varepsilon_M)\, t\, ,\\
		\bm{U}_M &\text{if } {-c}(\varepsilon_M)\, t\leqslant x \leqslant c(\varepsilon_M)\, t\, ,\\
		{\bm V}_2(x/t) &\text{if } c(\varepsilon_M)\, t \leqslant x \leqslant c(\varepsilon_R)\, t\, ,\\
		\bm{U}_R &\text{if } c(\varepsilon_R)\, t \leqslant x\, ,\\
		\end{array}}\!\right.
	\end{equation}
	where ${\bm V}_1(\xi)$ and ${\bm V}_2(\xi)$ satisfy (\ref{SolRarefaction}) with $p=1$ and $p=2$ respectively. Here, $\varepsilon_M\approx 2.839\times 10^{-4}$.
	
	\paragraph*{1-shock-rarefaction, 2-rarefaction-shock \mdseries{(Landau)}.}
	On figure~\ref{fig:Lan1}, we display the solution with initial data $\varepsilon_L=-10^{-4}$, $\varepsilon_R=-2\times 10^{-4}$, $v_L=-0.6$~m.s$^{-1}$ and $v_R=0.6$~m.s$^{-1}$. It consists of two compound waves:
	\begin{equation}
	\bm{U}(x,t)=
	\left\lbrace\!
	{\renewcommand{\arraystretch}{1.2}
		\begin{array}{ll}
		\bm{U}_L &\text{if } x < {-c}(\varepsilon_L^*)\, t\, ,\\
		{\bm V}_1(x/t) &\text{if } {-c}(\varepsilon_L^*)\, t \leqslant x \leqslant {-c}(\varepsilon_M)\, t\, ,\\
		\bm{U}_M &\text{if } {-c}(\varepsilon_M)\, t \leqslant x \leqslant c(\varepsilon_M)\, t\, ,\\
		{\bm V}_2(x/t) &\text{if } c(\varepsilon_M)\, t \leqslant x < c(\varepsilon_R^*)\, t\, ,\\
		\bm{U}_R &\text{if } c(\varepsilon_R^*)\, t \leqslant x\, .\\
		\end{array}}\!\right.
	\end{equation}
	Here, $\varepsilon_M\approx 1.604\times 10^{-4}$. The rarefactions break at $\varepsilon_L^*=0$ and $\varepsilon_R^*=0.5\times 10^{-4}$ (\ref{EqWendroffLandau}).

	
	\begin{figure}
		\begin{minipage}{0.25\linewidth}
			\centering
			(a)
			\vspace{0.5em}
			
			\includegraphics{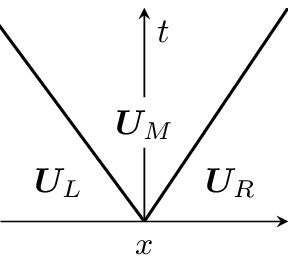}
			
			\vspace{1em}
			(b)
			\vspace{0.5em}
			
			\includegraphics{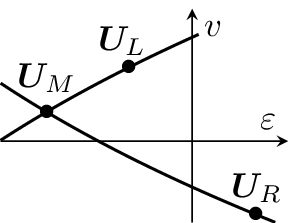}
		\end{minipage}
		\begin{minipage}{0.75\linewidth}
			\centering
			(c)
			
			~\hspace{0.5em}\includegraphics{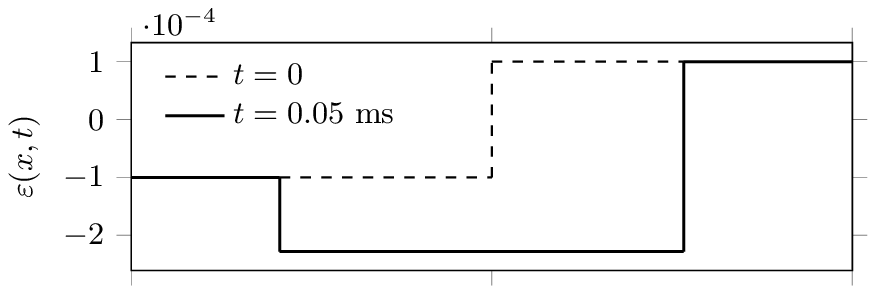}
			
			\vspace{-1em}
			\includegraphics{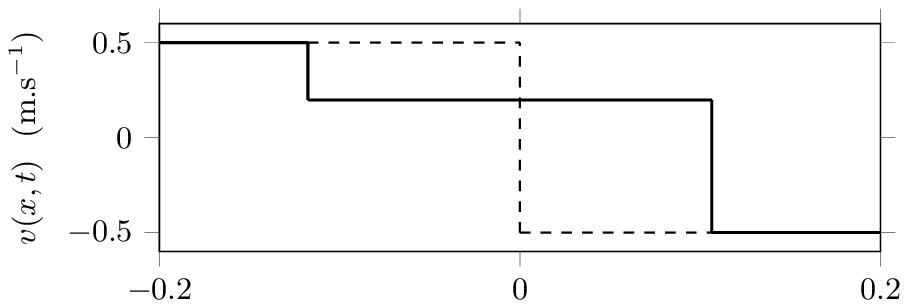}
		\end{minipage}
		\caption{(a) Solution to the Riemann problem for the hyperbola (\ref{BLawHyp}) with two shock waves. (b) Hugoniot loci. (c) Analytical solution at $t=0$ and $t=0.05$~ms.\label{fig:Hyp1}}
	\end{figure}
	\begin{figure}
		\begin{minipage}{0.25\linewidth}
			\centering
			(a)
			\vspace{0.5em}
			
			\includegraphics{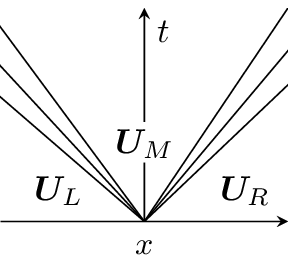}
			
			\vspace{1em}
			(b)
			\vspace{0.5em}
			
			\includegraphics{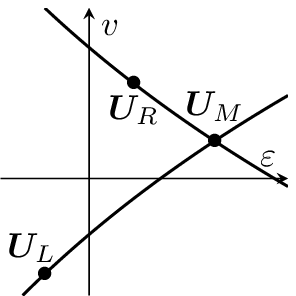}
		\end{minipage}
		\begin{minipage}{0.75\linewidth}
			\centering
			(c)
			
			~\hspace{1.2em}\includegraphics{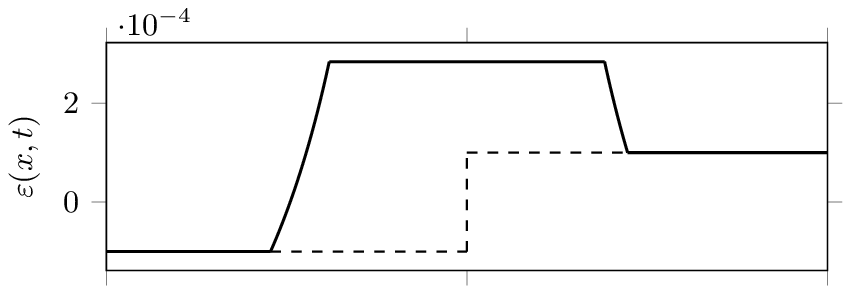}
			
			\vspace{-1em}	
			\includegraphics{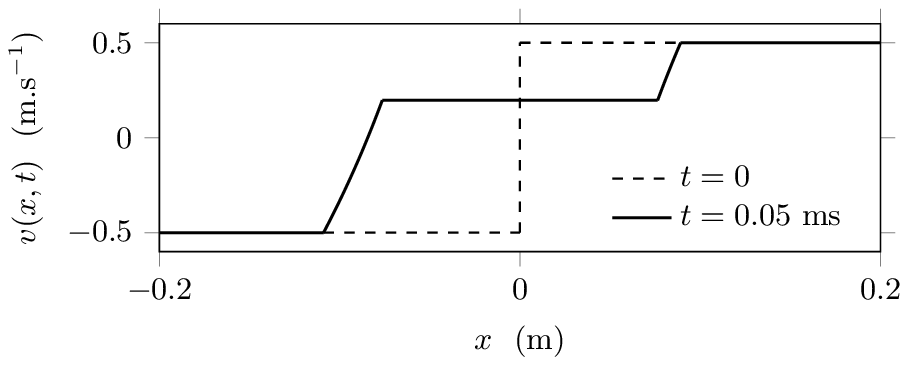}
		\end{minipage}
		\caption{(a) Solution to the Riemann problem for the hyperbola (\ref{BLawHyp}) with two rarefactions. (b) Rarefaction curves. (c) Analytical solution at $t=0$ and $t=0.05$~ms.\label{fig:Hyp4}}
	\end{figure}
	\begin{figure}
		\begin{minipage}{0.25\linewidth}
			\centering
			(a)
			\vspace{0.5em}
			
			\includegraphics{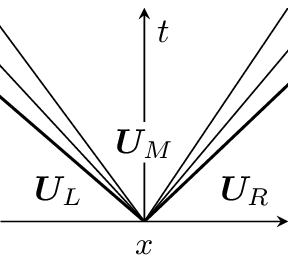}
			
			\vspace{1em}
			(b)
			\vspace{0.5em}
			
			\includegraphics{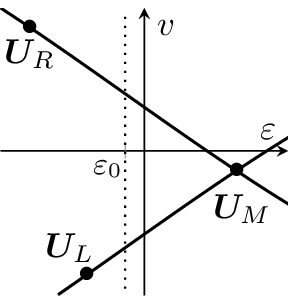}
		\end{minipage}
		\begin{minipage}{0.75\linewidth}
			\centering
			(c)
			
			~\hspace{-0.3em}\includegraphics{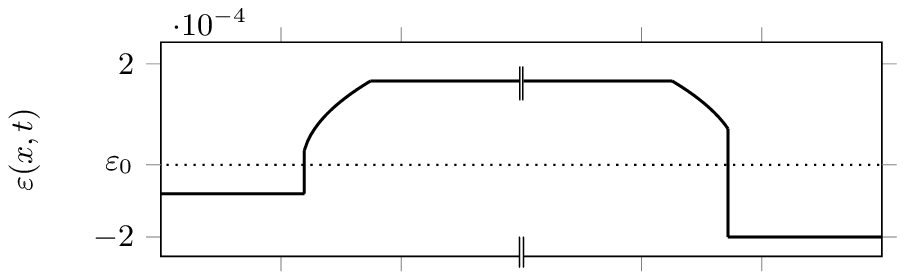}
			
			\vspace{-1em}	
			\includegraphics{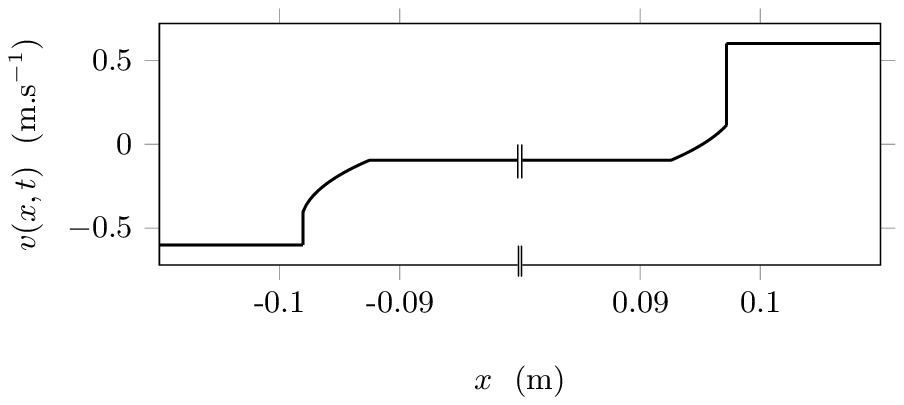}
		\end{minipage}
		\caption{(a) Solution to the Riemann problem for Landau's law (\ref{BLawLandau}) with two compound waves. (b) 1-shock-rarefaction and 2-rarefaction-shock curves. (c) Analytical solution at $t=0.05$~ms. The $x$-axis is broken from ${-0.08}$ to $0.08$~m.\label{fig:Lan1}}
	\end{figure}
	
	
	\section{Conclusion}\label{sec:Conclusion}
	
	When the constitutive law is convex or concave, the system of 1D elastodynamics is similar to the $p$-system of barotropic gas dynamics. The $\varepsilon$-$v$ plane can be split into four admissibility regions: one for each combination of a 1-wave and a 2-wave \cite{godlewski96}. In this case, we obtain a new condition on the velocity jump $v_R - v_L$ which ensures the existence of the solution to the Riemann problem, whether the hyperbolicity domain is bounded or not. Also, we provide analytic expressions to compute the solution straightforwardly for the hyperbola and the quadratic Landau's law.
	
	These results have been extended to constitutive laws which are neither convex nor concave. Indeed, for constitutive laws with one inflection point, we obtain a new condition on the velocity jump which ensures the existence of the solution to the Riemann problem. Furthermore, we propose a partition of the $\varepsilon$-$v$ plane into nine admissibility regions. An application and a Matlab toolbox are freely available at {\color{blue}\url{http://gchiavassa.perso.centrale-marseille.fr/RiemannElasto/}}. The mathematics and the approach presented here could be applied to more complicated constitutive laws, e.g. with a disjoint union of inflexion points.
	
	
	\section{Acknowledgments}\label{sec:Aknow}
	
	We acknowledge St{\'e}phane Junca (JAD, Nice) for his bibliographical insights.
	

	
	\appendix
	
	
	\section{}\label{sec:Maths}
	
	\subsection{Elementary wave curves}\label{subsec:WaveCurves}
	
	Here, we list some properties of the curves $\mathcal{S}_p^\ell$, $\mathcal{R}_p^\ell$, $\mathcal{SR}_p^\ell$ and $\mathcal{RS}_p^\ell$.
	
	\paragraph*{Discontinuities.}
	Let us differentiate equation (\ref{EqRH}). We obtain
	\begin{equation}
	{\addtolength{\jot}{0.5em}
		\begin{aligned}
		\frac{d}{d \varepsilon} \mathcal{S}_1^\ell(\varepsilon) &= \frac{1}{2} \sqrt{\frac{\sigma(\varepsilon) - \sigma(\varepsilon_\ell)}{\rho_0\, (\varepsilon - \varepsilon_\ell)}} \left( 1 + \sigma'(\varepsilon) \bigg/ \frac{\sigma(\varepsilon) - \sigma(\varepsilon_\ell)}{\varepsilon - \varepsilon_\ell} \right) = -\frac{d}{d \varepsilon} \mathcal{S}_2^\ell(\varepsilon)\\
		& > 0\, .
		\end{aligned}}
	\label{EqRHDiff}
	\end{equation}
	Therefore, $\mathcal{S}_1^\ell$ is an increasing bijection and $\mathcal{S}_2^\ell$ is a decreasing bijection.
	
	\paragraph*{Rarefactions.}
	Since $C$ is the primitive of a strictly positive continuous function, $C$ is strictly increasing and continuous. Therefore, $\mathcal{R}_1^\ell$ is an increasing bijection and $\mathcal{R}_2^\ell$ is a decreasing bijection (\ref{EqRarefaction}).
	
	\paragraph*{Shock-rarefactions.}
	Shock-rarefaction curves (\ref{EqSR}) have the same properties as rarefaction curves (\ref{EqRarefaction}). Indeed, they differ only by a constant, which equals zero if $\varepsilon_\ell=\varepsilon_\ell^*=\varepsilon_0$.
	
	\paragraph*{Rarefaction-shocks.}
	Let use differentiate equation (\ref{EqRS}). We obtain
	\begin{equation}
		\frac{d}{d \varepsilon} \mathcal{RS}_1^\ell(\varepsilon) = c(\varepsilon^*) - \frac{d\varepsilon^*}{d\varepsilon}\, c'(\varepsilon^*) \left(\varepsilon^* - \varepsilon\right) = -\frac{d}{d \varepsilon}  \mathcal{RS}_2^\ell(\varepsilon)\, ,
		\label{EqRSDiff1}
	\end{equation}
	where
	\begin{equation}
		c(\varepsilon^*)=\sqrt{\frac{\sigma'(\varepsilon^*)}{\rho_0}}
		\qquad\text{and}\qquad
		c'(\varepsilon^*)=\frac{\sigma''(\varepsilon^*)}{2 \sqrt{\rho_0\, \sigma'(\varepsilon^*)}}		 
		\, .
		\label{EqRSDiff0}
	\end{equation}
	Applying the implicit functions theorem to $F(\varepsilon^*,\varepsilon)$ in (\ref{EqWendroff}) requires $\partial F / \partial a (\varepsilon^*,\varepsilon) \neq 0$. Since $\partial F / \partial a (\varepsilon^*,\varepsilon) = \sigma''(\varepsilon^*)$, the hypotheses of the theorem are satisfied if $\varepsilon^*\neq \varepsilon_0$ (\ref{NLNGen}). Finally,
	\begin{equation}
	{\addtolength{\jot}{0.5em}
		\begin{aligned}
		\frac{d\varepsilon^*}{d\varepsilon} &= - \frac{\partial F / \partial b}{\partial F / \partial a} (\varepsilon^*,\varepsilon)
		\\
		&= \frac{\sigma'(\varepsilon^*) - \sigma'(\varepsilon)}{\sigma''(\varepsilon^*)(\varepsilon^* - \varepsilon)}\, .
		\end{aligned}}
	\label{Implicit}
	\end{equation}
	Thus,
	\begin{equation}
	{\addtolength{\jot}{0.2em}
		\begin{aligned}
		\frac{d}{d \varepsilon} \mathcal{RS}_1^\ell(\varepsilon) &= \frac{\sigma'(\varepsilon^*) + \sigma'(\varepsilon)}{2 \sqrt{\rho_0\, \sigma'(\varepsilon^*)}} = -\frac{d}{d \varepsilon} \mathcal{RS}_2^\ell(\varepsilon)\\
		&> 0\, .
		\end{aligned}}
	\label{EqRSDiff2}
	\end{equation}
	Therefore, $\mathcal{RS}_1^\ell$ is an increasing bijection and $\mathcal{RS}_2^\ell$ is a decreasing bijection.
	
	
	\subsection{Restriction on the velocity jump}\label{subsec:MathsJump}
	
	In this section, we provide analytical expressions deduced from theorems~\ref{thm:Intersect} and \ref{thm:IntersectNLNGen}.
	
	\paragraph*{Concave constitutive laws.}
	We go back to the condition that must be satisfied by the initial data when the constitutive law is concave, i.e. equation (\ref{IntersectCond}) in theorem~\ref{thm:Intersect}. According to the expressions of $\Phi_1^L$ and $\Psi_2^R$ in (\ref{WNLGen}), one has
	\begin{equation}
		\underset{\varepsilon\rightarrow \varepsilon_\textit{inf}+}{\lim} \mathcal{S}_2^R(\varepsilon) - \mathcal{S}_1^L(\varepsilon) > 0
		\qquad\text{and}\qquad
		\underset{\varepsilon\rightarrow \varepsilon_\textit{sup}-}{\lim} \mathcal{R}_2^R(\varepsilon) - \mathcal{R}_1^L(\varepsilon) < 0\, .
		\label{IntersectCondCW1}
	\end{equation}
	This can be expressed in terms of the velocity jump $v_R - v_L$. Based on (\ref{EqRH}) and (\ref{EqRarefaction}), condition (\ref{IntersectCondCW1}) becomes
	\begin{equation}
	\left\lbrace
	{\addtolength{\jot}{0.1em}
		\begin{aligned}
		& v_R - v_L > {-\underset{\varepsilon\rightarrow \varepsilon_\textit{inf}+}{\lim}} \!\left(\! \sqrt{\frac{\sigma(\varepsilon)-\sigma(\varepsilon_L)}{\rho_0} (\varepsilon-\varepsilon_L)} + \sqrt{\frac{\sigma(\varepsilon)-\sigma(\varepsilon_R)}{\rho_0} (\varepsilon-\varepsilon_R)} \right) ,\\
		& v_R - v_L < \underset{\varepsilon\rightarrow \varepsilon_\textit{sup}-}{\lim} 2\, C(\varepsilon) - C(\varepsilon_L) - C(\varepsilon_R) \, .
		\end{aligned}}
	\right.
	\label{IntersectCondNLGen}
	\end{equation}
	
	\paragraph*{Convex-concave constitutive laws.}
	The same condition (\ref{IntersectCondNLNGenThm}) must be satisfied by the initial data when the constitutive law is strictly convex for $\varepsilon<\varepsilon_0$ and strictly concave for $\varepsilon>\varepsilon_0$ (theorem~\ref{thm:IntersectNLNGen}). The expressions of $\Phi_1^L$ and $\Psi_2^R$ are given by (\ref{WNLNGen}). For instance, when $\varepsilon$ tends towards $\varepsilon_\textit{inf}$ in $\Phi_1^L(\varepsilon)$, one needs a comparison between $\varepsilon_L^*$ and $\varepsilon_\textit{inf}$ to choose the correct elementary wave curve. Since $\sigma'(\varepsilon_L^*)>0=\sigma'(\varepsilon_\textit{inf})$, it is immediate that $\varepsilon_L^*>\varepsilon_\textit{inf}$. Similar comparisons can be written to select the correct elementary curve in $\Psi_2^R(\varepsilon)$ or when $\varepsilon$ tends towards $\varepsilon_\textit{sup}$. Finally, (\ref{IntersectCondNLNGenThm}) writes
	\begin{itemize}
		\item if $\varepsilon_L \geqslant \varepsilon_0$ and $\varepsilon_R \geqslant \varepsilon_0$
		\begin{equation}
			\underset{\varepsilon\rightarrow \varepsilon_\textit{inf}+}{\lim} \mathcal{SR}_2^R(\varepsilon) - \mathcal{SR}_1^L(\varepsilon) > 0
			\quad\text{and}\quad
			\underset{\varepsilon\rightarrow \varepsilon_\textit{sup}-}{\lim} \mathcal{R}_1^L(\varepsilon) - \mathcal{R}_2^R(\varepsilon) > 0 \, ,
			\label{IntersectCondNLNGen1}
		\end{equation}
		\item if $\varepsilon_L \geqslant \varepsilon_0 > \varepsilon_R$
		\begin{equation}
			\underset{\varepsilon\rightarrow \varepsilon_\textit{inf}+}{\lim} \mathcal{R}_2^R(\varepsilon) - \mathcal{SR}_1^L(\varepsilon) > 0
			\quad\text{and}\quad
			\underset{\varepsilon\rightarrow \varepsilon_\textit{sup}-}{\lim} \mathcal{R}_1^L(\varepsilon) - \mathcal{SR}_2^R(\varepsilon) > 0 \, ,
			\label{IntersectCondNLNGen2}
		\end{equation}
		\item if $\varepsilon_R \geqslant \varepsilon_0 > \varepsilon_L$
		\begin{equation}
			\underset{\varepsilon\rightarrow \varepsilon_\textit{inf}+}{\lim} \mathcal{SR}_2^R(\varepsilon) - \mathcal{R}_1^L(\varepsilon) > 0
			\quad\text{and}\quad
			\underset{\varepsilon\rightarrow \varepsilon_\textit{sup}-}{\lim} \mathcal{SR}_1^L(\varepsilon) - \mathcal{R}_2^R(\varepsilon) > 0 \, ,
			\label{IntersectCondNLNGen3}
		\end{equation}
		\item if $\varepsilon_L<\varepsilon_0$ and $\varepsilon_R<\varepsilon_0$
		\begin{equation}
			\underset{\varepsilon\rightarrow \varepsilon_\textit{inf}+}{\lim} \mathcal{R}_2^R(\varepsilon) - \mathcal{R}_1^L(\varepsilon) > 0
			\quad\text{and}\quad
			\underset{\varepsilon\rightarrow \varepsilon_\textit{sup}-}{\lim} \mathcal{SR}_1^L(\varepsilon) - \mathcal{SR}_2^R(\varepsilon) > 0 \, .
			\label{IntersectCondNLNGen4}
		\end{equation}
	\end{itemize}
	Based on the expressions of the elementary wave curves (\ref{EqRH}), (\ref{EqRarefaction}), (\ref{EqRS}) and (\ref{EqSR}), inequalities (\ref{IntersectCondNLNGen1})-(\ref{IntersectCondNLNGen4}) become
	\begin{itemize}
		\item if $\varepsilon_L \geqslant \varepsilon_0$ and $\varepsilon_R \geqslant \varepsilon_0$
	\end{itemize}
	\begin{equation}
	\left\lbrace
	{\addtolength{\jot}{0.1em}
		\begin{aligned}
		v_R - v_L > &\underset{\varepsilon\rightarrow \varepsilon_\textit{inf}+}{\lim} 2\,C(\varepsilon) - C(\varepsilon_L^*) - c(\varepsilon_L^*)(\varepsilon_L-\varepsilon_L^*) \\
		&- C(\varepsilon_R^*) - c(\varepsilon_R^*)(\varepsilon_R-\varepsilon_R^*)\, ,\\
		v_R - v_L < &\underset{\varepsilon\rightarrow \varepsilon_\textit{sup}-}{\lim} 2\, C(\varepsilon) - C(\varepsilon_L) - C(\varepsilon_R)\, ,
		\end{aligned}}
	\right.
	\label{IntersectCondNLNGenCalc1}
	\end{equation}
	\begin{itemize}
		\item if $\varepsilon_L\geqslant\varepsilon_0>\varepsilon_R$
	\end{itemize}
	\begin{equation}
	\left\lbrace
	{\addtolength{\jot}{0.1em}
		\begin{aligned}
		v_R - v_L > &\underset{\varepsilon\rightarrow \varepsilon_\textit{inf}+}{\lim} 2\,C(\varepsilon) - C(\varepsilon_L^*) - c(\varepsilon_L^*)(\varepsilon_L-\varepsilon_L^*) - C(\varepsilon_R)\, ,\\
		v_R - v_L < &\underset{\varepsilon\rightarrow \varepsilon_\textit{sup}-}{\lim} 2\,C(\varepsilon) - C(\varepsilon_L) - C(\varepsilon_R^*) - c(\varepsilon_R^*)(\varepsilon_R-\varepsilon_R^*)\, ,
		\end{aligned}}
	\right.
	\label{IntersectCondNLNGenCalc2}
	\end{equation}
	\begin{itemize}
		\item if $\varepsilon_R\geqslant\varepsilon_0>\varepsilon_L$
	\end{itemize}
	\begin{equation}
	\left\lbrace
	{\addtolength{\jot}{0.1em}
		\begin{aligned}
		v_R - v_L > &\underset{\varepsilon\rightarrow \varepsilon_\textit{inf}+}{\lim} 2\,C(\varepsilon) - C(\varepsilon_L) - C(\varepsilon_R^*) - c(\varepsilon_R^*)(\varepsilon_R-\varepsilon_R^*)\, ,\\
		v_R - v_L < &\underset{\varepsilon\rightarrow \varepsilon_\textit{sup}-}{\lim} 2\,C(\varepsilon) - C(\varepsilon_L^*) - c(\varepsilon_L^*)(\varepsilon_L-\varepsilon_L^*) - C(\varepsilon_R)\, ,
		\end{aligned}}
	\right.
	\label{IntersectCondNLNGenCalc3}
	\end{equation}
	\begin{itemize}
		\item if $\varepsilon_L<\varepsilon_0$ and $\varepsilon_R<\varepsilon_0$
	\end{itemize}
	\begin{equation}
	\left\lbrace
	{\addtolength{\jot}{0.1em}
		\begin{aligned}
		v_R - v_L > &\underset{\varepsilon\rightarrow \varepsilon_\textit{inf}+}{\lim} 2\, C(\varepsilon) - C(\varepsilon_L) - C(\varepsilon_R)\, ,\\
		v_R - v_L < &\underset{\varepsilon\rightarrow \varepsilon_\textit{sup}-}{\lim} 2\,C(\varepsilon) - C(\varepsilon_L^*) - c(\varepsilon_L^*)(\varepsilon_L-\varepsilon_L^*) \\
		& - C(\varepsilon_R^*) - c(\varepsilon_R^*)(\varepsilon_R-\varepsilon_R^*)\, .
		\end{aligned}}
	\right.
	\label{IntersectCondNLNGenCalc4}
	\end{equation}


\begin{thebibliography}{99}
		
		\bibitem{guyer99}
		R.A. Guyer and P.A. Johnson,
		Nonlinear mesoscopic elasticity: Evidence for a
		new class of materials,
		\emph{Phys. Today} {\bfseries 52} (1999) 30--36.
		
		\bibitem{johnson96a}
		P.A. Johnson and P.N.J. Rasolofosaon,
		Nonlinear elasticity and stress-induced anisotropy in rock,
		\emph{J. Geophys. Res.} {\bfseries 101} (1996) 3113--3124.
		
		\bibitem{johnson96b}
		P.A. Johnson, B. Zinszner and P.N.J. Rasolofosaon,
		Resonance and elastic nonlinear phenomena in rock,
		\emph{J. Geophys. Res.} {\bfseries 101} (1996) 11553--11564.
		
		\bibitem{abeele96}
		K.E.A. Van Den Abeele and P.A. Johnson,
		Elastic pulsed wave propagation in media with second- or higher-order nonlinearity. Part II. Simulation of experimental
		measurements on Berea sandstone,
		\emph{J. Acoust. Soc. Am.} {\bfseries 99}-6 (1996) 3346--3352.
		
		\bibitem{nazarov00}
		V.E. Nazarov and A.B. Kolpakov,
		Experimental investigations of nonlinear acoustic phenomena in polycrystalline zinc,
		\emph{J. Acoust. Soc. Am.} {\bfseries 107}-4 (2000) 1915--1921.
		
		\bibitem{landaulifschitz59}
		L.D. Landau and E.M. Lifshitz,
		\emph{Theory of Elasticity},
		Pergamon Press (1959).
		
		\bibitem{mccall94b}
		K.R. McCall,
		Theoretical study of nonlinear elastic wave propagation,
		\emph{J. Geophys. Res.} {\bfseries 99} (1994) 2591--2600.
		
		\bibitem{lombardWM15}
		N. Favrie, B. Lombard and C. Payan,
		Fast and slow dynamics in a nonlinear elastic bar excited by longitudinal vibrations,
		\emph{Wave Motion} {\bfseries 56} (2015) 221--238.
		
		\bibitem{godlewski96}
		E. Godlewski and P.A. Raviart,
		\emph{Numerical Approximation of Hyperbolic Systems of Conservation Laws},
		Springer (1996).
		
		\bibitem{meurer02}
		T. Meurer, J. Qu and L.J. Jacobs,
		Wave propagation in nonlinear and hysteretic media---a numerical study,
		\emph{Int. J. Solids Structures} {\bfseries 39}-21 (2002) 5585--5614.
		
		\bibitem{wendroff72a}
		B. Wendroff,
		The Riemann problem for materials with nonconvex equations of state I: Isentropic flow,
		\emph{J. Math. Anal. Appl.} {\bfseries 38}-2 (1972) 454--466.
		
		\bibitem{liu74}
		T.P. Liu,
		The Riemann problem for general $2\times 2$ conservation laws,
		\emph{Trans. Amer. Math. Soc.} {\bfseries 199} (1974) 89--112.
		
		\bibitem{dafermos05}
		C.M. Dafermos,
		\emph{Hyberbolic Conservation Laws in Continuum Physics},
		2\textsuperscript{nd} ed., Springer (2005).
		
		\bibitem{shearer95}
		M. Shearer and Y. Yang,
		The Riemann problem for a system of conservation
		laws of mixed type with a cubic nonlinearity,
		\emph{Proc. R. Soc. Edinb. A} {\bfseries 125}-4 (1995) 675--699.
		
		\bibitem{achenbach73}
		J.D. Achenbach,
		\emph{Wave Propagation in Elastic Solids},
		Elsevier (1973).
		
		\bibitem{leveque02}
		R.J. LeVeque,
		\emph{Finite Volume Methods for Hyperbolic Problems},
		Cambridge University Press (2002).
		
	\end{thebibliography}
\end{document}